\theoremstyle{definition}
\newtheorem{remark}{Remark}[chapter]
\theoremstyle{plain}
\newtheorem{lemma}{Lemma}[chapter]
\newtheorem{theorem}{Theorem}[chapter]
\newtheorem{conjecture}{Conjecture}[chapter]
\begin{document}
\mainmatter

\title{Periodic striped states in Ising models with dipolar interactions}
\titlemark{Striped states in Ising models with dipolar interactions}

\emsauthor{1}{Davide Fermi}{D.~Fermi}
\emsauthor{2}{Alessandro Giuliani}{A.~Giuliani}


\emsaffil{1}{Universit\`a degli Studi Roma Tre, Dipartimento di Matematica e Fisica, L.go S.L. Murialdo 1, 00146 Roma, Italy \email{fermidavide@gmail.com}}

\emsaffil{2}{Universit\`a degli Studi Roma Tre, Dipartimento di Matematica e Fisica, L.go S.L. Murialdo 1, 00146 Roma, Italy; 
Centro Linceo Interdisciplinare {\textit{Beniamino Segre}}, Accademia Nazionale dei Lincei,
Palazzo Corsini, Via della Lungara 10, 00165 Roma, Italy \email{alessandro.giuliani@uniroma3.it}}

\classification[82B20]{82D40}

\keywords{Ising models, competing interactions, dipolar systems, stripe formation, reflection positivity.}

\begin{abstract}
We review the problem of determining the ground states of 2D Ising models with nearest neighbor ferromagnetic and dipolar interactions, and prove a new result 
supporting the conjecture that, if the nearest neighbor coupling $J$ is sufficiently large, the ground states are periodic and `striped'.
More precisely, we prove a restricted version of the conjecture, by constructing the minimizers within the variational class of states whose domain walls are arbitrary collections of 
horizontal and/or vertical straight lines. 
\end{abstract}

\makecontribtitle


\section{Brief review of the state-of-the-art and main results}

In this contribution we review the state-of-the-art of a problem that one of us started to investigate with Elliott Lieb and Joel Lebowitz more than 15 years ago, which consists in proving that the 
ground states of a toy model for thin magnetic films with dipolar interactions, in a suitable parameter range, are periodic and `striped', in a sense to 
be clarified soon. We also prove a new result, by characterizing the minimizers of the model within a variational class of states that are generically a-periodic in both coordinate 
directions. Our hope is that the methods employed in its proof will be useful for further progress towards a full characterization of the global minimizers of the model. 

The model of interest is a 2D Ising model whose formal Hamiltonian on $\mathbb Z^2$ reads: 
\begin{equation}\label{eq:0}
\mathcal{H}(\bm{\sigma}) = -\,{J \over 2}\! \sum_{\mbox{{\scriptsize $\begin{matrix} \bm{x}, \bm{y} \!\in\! \mathbb Z^2, \vspace{-0.07cm}\\ |\bm{x} - \bm{y}| \!=\! 1 \end{matrix}$}}} \!\!(\sigma_{\bm{x}}\sigma_{\bm{y}}-1) \,+\, {1 \over 2} \sum_{ \bm{x},\, \bm{y} \in \mathbb Z^2}
{\sigma_{\bm{x}}\sigma_{\bm{y}} - 1 \over |\bm{x} - \bm{y}|^3} \;,
\end{equation}
where $J>0$, $\bm{\sigma} \equiv \{\sigma_{\bm{x}}\}_{\bm{x} \in \mathbb Z^2} \in \{\pm 1\}^{\mathbb Z^2}$ is a generic Ising spin configuration, and it is understood that the 
diagonal terms in the second sum (i.e., those with $\bm{x}=\bm{y}$) must be interpreted as zero. Note that $\mathcal H$ is normalized so that the uniform states with $\sigma_{\bm{x}} \equiv 1$ or $\sigma_{\bm{x}}  \equiv -1$ have zero energy. This model provides an oversimplified description of a thin magnetic film 
with an easy-axis of magnetization orthogonal to the sample; the first term on the right side of \eqref{eq:0} models a short-range ferromagnetic exchange interaction, while the second 
term describes the dipolar interaction among out-of-plane magnetic moments. The two terms compete: while the short-range interaction favors a uniform state, the dipolar term favors a staggered state such that $\sigma_{\bm{x}}=(-1)^{\|\bm{x}\|_1}$ or $\sigma_{\bm{x}}=(-1)^{\|\bm{x}\|_1+1}$ \cite{FILS80}. 
On the basis of numerical evidence and variational calculations, see e.g. \cite{MI1995}, it is believed that, for $J$ large enough, the competition between the two interactions induces the formation of periodic
structures, more precisely of periodic striped states of the form $\bm{\sigma}_{s}(h^*)=(-1)^{\lfloor x_2/h^*\rfloor}$, or translations, or discrete rotations thereof. Here the optimal stripe width $h^*$ 
is the minimizer of $\mathcal E(h)$, the energy per site of the periodic striped state $\bm{\sigma}_{s}(h)$. A proof of the fact that $\bm{\sigma}_{s}(h^*)$ is an infinite volume ground state 
of $\mathcal{H}$ is still open; the problem can be seen as one specific 
instance of the general question of understanding the spontaneous formation of patterns and periodic structures in many  body systems with competing interactions, which is one of the 
big open questions in statistical mechanics and condensed matter (and more: in fluid dynamics, in material science, in evolutionary biology, etc.).

In order to formulate the main questions, review the known results and state the new ones more precisely, let us formulate the problem in a finite box with periodic boundary conditions: 
let $\Lambda_{L} = \mathbb{Z}^2/L\mathbb Z^2$ be a simple cubic 2D torus of integer side $L > 0$, and 
\begin{equation}\label{eq:H}
\mathcal{H}_{L}(\bm{\sigma}) = -\,{J \over 2}\! \sum_{\mbox{{\scriptsize $\begin{matrix} \bm{x}, \bm{y} \!\in\! \Lambda_{L}, \vspace{-0.07cm}\\ |\bm{x} - \bm{y}| \!=\! 1 \end{matrix}$}}} \!\!(\sigma_{\bm{x}}\sigma_{\bm{y}}-1) \,+\, {1 \over 2} \sum_{\bm{x},\, \bm{y} \in \Lambda_{L}}\sum_{\bm{m} \in \mathbb{Z}^2} {\sigma_{\bm{x}}\sigma_{\bm{y}} - 1 \over |\bm{x} - \bm{y} + L \bm{m}|^3} \;.
\end{equation}
Here $\bm{\sigma}$ can be naturally thought of as an infinite volume Ising spin configuration that is $L$-periodic in both coordinate directions. Viceversa, for any $L$-periodic infinite Ising spin 
configuration $\bm{\sigma}$, we let its \emph{energy per site} be denoted by 
\begin{equation}\label{eq:Epersite}
\mathcal{E}(\bm{\sigma}) :={1 \over (nL)^2}\, \mathcal{H}_{nL}(\bm{\sigma})\,,
\end{equation}
which is independent of $n\in\mathbb N$. In particular, if we consider the periodic striped configuration $\bm{\sigma}_{s}(h)$, we let $\mathcal E_s(h):=\mathcal E(\bm{\sigma}_s(h))$. 
It is easy to see that for almost every $J>0$ the minimizer of $\mathcal E_s(h)$ over $\mathbb N$ is unique, and we denote it by $h^*=h^*(J)$; in the complementary exceptional set of $J$, there are two contiguous minimizers, denoted $h^*(J)$ and $h^*(J)+1$. We also denote by $e_0$ the specific ground state energy in the thermodynamic limit: 
\begin{equation}
e_0:=\lim_{L\to\infty} \min_{\bm{\sigma}}\mathcal E(\bm{\sigma}),\end{equation}
where $\min_{\bm{\sigma}}$ is performed over the $L$-periodic infinite spin configurations. 

\begin{conjecture}\label{Conj:1} There exists $J_0>0$ such that, for any $J\ge J_0$ and $L$ an integer multiple of $2h^*$, the only\footnote{More precisely, if $J$ belongs to the exceptional set for 
which $\mathcal E(h)$ has two minimizers and $L$ is an integer multiple of $2h^*(h^*+1)$, then in addition to the stated minimizers there are $4(h^*+1)$ extra ones, namely 
$\bm{\sigma}_s(h^*+1)$, its translations and its discrete rotations.\label{blah}}
minimizers of $\mathcal{H}_L$ are $\bm{\sigma}_s(h^*)$, its translations and its discrete rotations. In particular, $e_0=\mathcal E_s(h^*)$. 
\end{conjecture}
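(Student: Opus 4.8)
The plan is to attack the conjecture through reflection positivity (RP) and the associated chessboard estimate, reducing the infinite-volume variational problem to a comparison among a small family of reflection-periodic configurations, and then to break the residual rotational symmetry so as to single out the striped states.

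\emph{Step 1 (reflection positivity).} First I would establish that $\mathcal{H}_L$ is reflection positive with respect to reflections through every horizontal and vertical lattice line, both through sites and through bonds. For the nearest-neighbor term this is the classical Fröhlich--Israel--Lieb--Simon mechanism. For the dipolar term the key observation is the Gaussian representation
\begin{equation}
|\bm{x}|^{-3} = \frac{1}{\Gamma(3/2)}\int_0^\infty\! dt\; t^{1/2}\, e^{-t|\bm{x}|^2},
\end{equation}
together with the fact that each Gaussian factor $e^{-t|\bm{x}-\bm{y}|^2}$ factorizes across any reflection plane and is therefore RP; since RP is preserved under positive superpositions (here, integration in $t$ against the positive weight $t^{1/2}$), the full dipolar kernel inherits the property, and so does $\mathcal{H}_L$. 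The periodized kernel of \eqref{eq:H} requires the analogous statement for the $L$-periodic version of $|\bm{x}|^{-3}$, which follows from the same representation after summing the Gaussians over the images $L\bm{m}$.

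\emph{Step 2 (chessboard reduction).} From RP I would derive the chessboard estimate and use it to bound $\mathcal{E}(\bm{\sigma})$ from below by the minimum, over configurations generated by reflecting a fundamental tile, of a universal per-site energy. This confines the search for minimizers to reflection-periodic configurations and, matched against the explicit upper bound $\mathcal{E}_s(h^*)$ obtained by evaluating $\bm{\sigma}_s(h^*)$, would yield $e_0 = \mathcal{E}_s(h^*)$ once Step 3 is in place. Translations and discrete rotations of $\bm{\sigma}_s(h^*)$ are automatically minimizers by the symmetries of $\mathcal{H}_L$, and the extra degeneracy recorded in the footnote of Conjecture~\ref{Conj:1}, for $J$ in the exceptional set, is read off from the two contiguous minimizers $h^*$ and $h^*+1$ of $\mathcal{E}_s$.

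\emph{Step 3 (breaking the rotational symmetry --- the hard part).} The genuine obstacle is that RP and the chessboard estimate are symmetric under the exchange of the two coordinate directions and cannot, by themselves, prefer stripes to other reflection-periodic patterns such as square-block checkerboards: straight stripes break a symmetry that the RP machinery respects. To single them out one must show that any configuration possessing both horizontal and vertical domain walls is strictly higher in energy, i.e.\ that ``corners'' and junctions in the domain-wall network carry a positive cost, uniform in $J \ge J_0$. Here I would use the restricted result of this paper as a stepping stone: it already identifies the striped states as the minimizers within the product class $\sigma_{(x_1,x_2)} = \tau_{x_1}\rho_{x_2}$, whose domain walls are unions of straight lines. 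The remaining task is a local surgery argument --- straighten a bent domain wall and quantify the resulting change in the dipolar energy --- showing that junctions can always be removed at a strict energy gain. Controlling this corner cost quantitatively, uniformly in the parameter range and in the global geometry of the wall network, is precisely the ingredient the available RP methods do not supply, and is what keeps the full conjecture open; it is on exactly this step that I expect the methods developed for the restricted class to be the decisive input.
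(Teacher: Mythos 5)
The statement you are addressing is Conjecture~\ref{Conj:1}, which the paper explicitly leaves open: the paper proves only the restricted Theorem~\ref{thm:main}, i.e.\ minimality of $\bm{\sigma}_s(h^*)$ within the class $\Omega_L$ of configurations with straight-line contours. Your proposal is therefore a program rather than a proof, and your own Step~3 concedes as much: the ``local surgery'' establishing a strictly positive, uniform cost for corners and junctions in the domain-wall network is exactly the missing ingredient, and you do not supply it. Identifying the obstruction correctly (as \cite{GS16} overcame it for the faster-decaying kernels $1/|\bm{x}-\bm{y}|^p$, $p>4$, where the corner energy is summable) does not close it; for the dipolar exponent $p=3$ the corner contributions are only marginally controlled and no one has made this step work.

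Beyond that, Step~1 contains a concrete error: $\mathcal{H}_L$ in \eqref{eq:H} is \emph{not} reflection positive, and your Gaussian argument does not show that it is. First, a Gaussian factor $e^{-t(x_1-y_1)^2}$ is not an RP function of the separation across a reflection plane: RP requires the longitudinal dependence to be a positive superposition of decaying exponentials (complete monotonicity), whereas the cross kernel obtained from the Gaussian contains the factor $e^{-2tab}$ ($a,b>0$ the distances of $\bm{x},\bm{y}$ from the plane), whose $2\times 2$ minors $e^{-t(a_1^2+a_2^2)}-e^{-2ta_1a_2}$ are negative for $a_1\neq a_2$; so positive superpositions of Gaussians inherit nothing. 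The correct route for the dipolar kernel alone (cf.\ \cite{FILS80}) is a partial Fourier transform in the transverse direction, producing Bessel-$K_1$ weights that \emph{are} completely monotone in the longitudinal variable --- but this yields RP in the \emph{antiferromagnetic} sign convention, while the nearest-neighbor term $-J\sum\sigma_{\bm{x}}\sigma_{\bm{y}}$ is RP in the \emph{ferromagnetic} one, and no gauge (staggering) reconciles the two simultaneously. This incompatibility is precisely why \cite{GLL06,GLL07} introduced Block Reflection Positivity, and why the paper stresses that BRP ``applies to situations where the Hamiltonian is not RP.'' Consequently your Step~2 also fails as stated: the chessboard estimate with open boundary conditions that underlies \eqref{eq:RP} is applied in the paper only to configurations whose Peierls contours are unions of straight lines --- this is where the restriction to $\Omega_L$ enters and where the paper itself locates the main limitation --- so no chessboard reduction of \emph{arbitrary} configurations to reflection-periodic ones is available by these methods. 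In short: Step~1 is wrong, Step~2 holds only on the restricted class the paper already treats, and Step~3 is an honest restatement of the open problem.
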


As stated, the conjecture is still open. However, starting from the work \cite{GLL06}, several partial results supporting it have been proved. First of all, from 
\cite{GLL06,GLL07} it follows that, for $J$ large enough and $L$ an integer multiple of $2h^*$, the minimizers of $\mathcal H_L$ in the variational class of quasi-1D states, i.e., 
of states that are translationally invariant in one coordinate direction, are precisely the expected ones. Moreover, in \cite{GLL06}, lower bounds on $e_0$ matching with $\mathcal E_s(h^*)$ 
at dominant order\footnote{A computation shows that $\lim_{J\to\infty}e^{J/2}\mathcal E_{s}(h^*)=c^*<0$. The lower bound derived in \cite{GLL06} has the form $e_0\ge ce^{-J/2}$, with $c<c^*$.} as $J\to\infty$ are derived. The natural analogue of Conjecture \ref{Conj:1} has been proved in \cite{GS16} for a modified model in which the dipolar interaction decaying like  
$1/|\bm{x}-\bm{y}|^3$ is replaced by a faster-decaying polynomial interaction $1/|\bm{x}-\bm{y}|^p$, with $p>4$ (in this case the condition $J\ge J_0$ in the statement of the conjecture must be replaced by $J_c-\epsilon_0\le J<J_c$ for some $\epsilon_0>0$ and $J_c=\sum_{\bm{0}\neq\bm{n}\in\mathbb Z^2}|n_1|/|\bm{n}|^p$). The proof in \cite{GS16} is based on earlier partial results in \cite{GLL11,GLS13,GLS14} and it has been later generalized to a continuum version of the model in dimension $d\ge 2$ and $p\ge d+2-\epsilon$ for some 
$\epsilon=\epsilon(d)>0$ in \cite{DR19,K22}. 

The method of proof of all these papers is based on the use of Block Reflection Positivity (BRP), an extension of the standard Reflection 
Positivity (RP) method first proposed in \cite{GLL06,GLL07}. BRP, compared with standard RP, has the advantage to apply to situations where the Hamiltonian is not RP and even in the presence of boundary conditions different from periodic. The proofs in \cite{DR19,GS16,K22} on the striped periodic nature of the global minimizers of $d\ge 2$ models with polynomial interactions $1/|\bm{x}-\bm{y}|^p$, $p\ge d+2-\epsilon$, additionally require to combine RP with localization estimates into boxes of appropriate size. Further extensions of these ideas 
have been successfully applied to the proof of periodicity of the global minimizers of: 2D models of in-plane spins with dipolar interactions \cite{GLL07}; 2D models of martensitic phase transitions \cite{GMu12}; effective functionals with diffuse interfaces in  the presence of dipolar-like interactions in $d=1$ \cite{GLL09a,BEGM13} and $d\ge 2$ \cite{DR22};
models with competing interactions in a magnetic field or with mass constraint in $d=1$ \cite{GLL09b} and in $d\ge 2$ \cite{DR21}. 

\medskip

In this paper we prove a restricted version of Conjecture \ref{Conj:1} for the model with dipolar interactions $1/|\bm{x}-\bm{y}|^3$ in $d=2$, concerning the periodic striped nature of the minimizers of $\mathcal H_L$ within a variational class of states that are modulated, generically in a-periodic fashion, in both coordinate directions. In order to define this variational class 
more precisely, given an infinite spin configuration $\bm{\sigma}$, let $\Gamma(\bm{\sigma})$ be the corresponding union of Peierls contours, i.e., the union of the unit segments 
dual\footnote{The unit segment dual to an edge $(\bm{x},\bm{y})$ is the one orthogonal to the edge and centered at the center of the edge.}
to the nearest neighbor edges $(\bm{x},\bm{y})$ of $\mathbb  Z^2$ such that $\sigma_{\bm{x}}\neq\sigma_{\bm{y}}$. Moreover, we let $\Omega_L$ be the set of $L$-periodic 
infinite spin configurations $\bm{\sigma}$ such that $\Gamma(\bm{\sigma})$ consists of a union of (horizontal and/or vertical) straight lines. 

\begin{theorem}\label{thm:main}
There exists $J_0>0$ such that, for any $J\ge J_0$ and $L$ an integer multiple of $2h^*$, the only\footnote{Same caveat as in footnote \ref{blah}.}
minimizers of $\mathcal{H}_L$ within $\Omega_L$ are $\bm{\sigma}_s(h^*)$, its translations and its discrete rotations. 
\end{theorem}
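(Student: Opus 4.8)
The plan is to exploit the rigidity of the variational class. A configuration whose Peierls contours $\Gamma(\bm\sigma)$ form a union of straight lines is necessarily of \emph{product form}: there are $L$-periodic functions $\epsilon_1,\epsilon_2:\mathbb Z\to\{\pm1\}$ such that $\sigma_{\bm x}=\epsilon_1(x_1)\,\epsilon_2(x_2)$, the vertical lines of $\Gamma(\bm\sigma)$ sitting at the sign changes of $\epsilon_1$ and the horizontal ones at those of $\epsilon_2$; conversely every such product lies in $\Omega_L$. In this parametrization the \emph{pure stripe} states are exactly those for which one of $\epsilon_1,\epsilon_2$ is constant, and $\bm\sigma_s(h^*)$ together with its translates and discrete rotations corresponds to taking the optimal one-dimensional profile in one direction and the constant profile in the other. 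The genuinely two-dimensional elements of $\Omega_L$ are the \emph{plaids}, for which both $\epsilon_1$ and $\epsilon_2$ are non-constant.

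The algebraic heart of the argument is an exact energy splitting. Setting $a=\epsilon_1(x_1)\epsilon_1(y_1)$ and $b=\epsilon_2(x_2)\epsilon_2(y_2)$, using $ab-1=(a-1)+(b-1)+(a-1)(b-1)$, and noting that each nearest-neighbour bond changes a single coordinate (so the mixed term drops there), one obtains
\begin{equation}\label{eq:split}
\mathcal H_L(\bm\sigma)=\mathcal H_L(\bm\sigma^{(1)})+\mathcal H_L(\bm\sigma^{(2)})+C(\epsilon_1,\epsilon_2),
\end{equation}
where $\bm\sigma^{(i)}$ is the pure-stripe configuration built from $\epsilon_i$ alone and the \emph{crossing cost}
\begin{equation}\label{eq:cross}
C(\epsilon_1,\epsilon_2)=\frac12\sum_{\bm x,\bm y\in\Lambda_L}G_L(\bm x-\bm y)\,\big(1-\epsilon_1(x_1)\epsilon_1(y_1)\big)\big(1-\epsilon_2(x_2)\epsilon_2(y_2)\big)
\end{equation}
involves only the strictly positive periodized dipolar kernel $G_L(\bm z)=\sum_{\bm m}|\bm z+L\bm m|^{-3}$. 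Hence $C\ge0$, with $C=0$ if and only if $\epsilon_1$ or $\epsilon_2$ is constant, i.e.\ exactly on the pure stripes. On the pure stripes \eqref{eq:split} reduces the problem to the one-dimensional minimization solved in \cite{GLL06,GLL07}, giving $\mathcal H_L(\bm\sigma^{(i)})\ge E^*:=\mathcal H_L(\bm\sigma_s(h^*))$ with equality only for $\bm\sigma_s(h^*)$ and its symmetries.

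For a plaid it then remains to prove the strict inequality $\mathcal H_L(\bm\sigma)=E_1+E_2+C>E^*$, where $E_i:=\mathcal H_L(\bm\sigma^{(i)})\ge E^*$, equivalently $C>E^*-E_1-E_2$. If $E_1\ge0$ or $E_2\ge0$ this is immediate, since then $E^*-E_1-E_2\le0<C$. The substance lies in the complementary regime where both pure-stripe energies are strictly negative — so both $\bm\sigma^{(i)}$ are favourable stripe patterns — and there one must establish the quantitative bound $C\ge E^*+|E_1|+|E_2|$. A natural route is to invoke the block reflection positivity and chessboard estimates of \cite{GLL06,GLL07} (reflections map products to products) to reduce the minimization over plaids to periodic checkerboards with blocks of size $h_1\times h_2$, for which $E_1=\mathcal H_L(\bm\sigma_s(h_1))$, $E_2=\mathcal H_L(\bm\sigma_s(h_2))$ and $C=C(h_1,h_2)$ are all amenable to explicit asymptotic analysis as $J\to\infty$.

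The main obstacle is exactly this last comparison, and it is delicate because it offers no exponential safety margin: the stripe-energy gain $|\mathcal E_s(h^*)|$ and the per-site crossing cost are both of order $1/h^*\sim e^{-J/2}$, so the inequality is decided by leading constants and logarithmic factors. Concretely, the crossing cost per site behaves like $h_2^{-1}\log(h_2/h_1)$ for $h_2\gg h_1$, which is precisely what is needed to dominate $|\mathcal E_s(h_2)|\sim h_2^{-1}(2\log h_2-J)$, while for comparable blocks it is a pure constant times $1/h^*$ that must be shown to exceed the corresponding constant in $\mathcal E_s(h^*)$. The hypothesis $J\ge J_0$ serves to make $h^*$ large enough that the subleading corrections to these asymptotics are negligible and the leading-order comparison is decisive; carrying out this comparison uniformly in $(h_1,h_2)$, and thereby also isolating the equality cases announced in footnote \ref{blah}, is where the real work resides.
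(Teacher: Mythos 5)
Your reductions are sound, and in fact they overlap substantially with the paper's own devices: every element of $\Omega_L$ is indeed of product form, your splitting identity is an exact algebraic identity with $C\geqslant 0$ (strictly positive on plaids, since the periodized kernel is positive), and for checkerboards it is precisely the identity \eqref{eq:DE1} that the paper exploits in Lemma \ref{lemma:int}; likewise, your appeal to \cite{GLL06,GLL07} for the quasi-1D bound $E_i\geqslant E^*$ and to the chessboard estimate to reduce plaids to periodic checkerboards reproduces \eqref{eq:RP}. The genuine gap is that your argument stops exactly where the proof begins: the decisive inequality $\mathcal E(h_1,h_2)>\mathcal E_s(h^*)$ for all $(h_1,h_2)\neq(h^*,\infty),(\infty,h^*)$ --- in your formulation, $C>E^*-E_1-E_2$ uniformly over the two-parameter family --- is deferred to an unspecified ``explicit asymptotic analysis,'' which you yourself acknowledge is where the real work resides. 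That analysis is the entire mathematical content of the paper: the sharp asymptotics of Lemma \ref{lemma:stripes}, whose closed-form constant $\alpha_s$ is indispensable because the comparison is decided at the level of $O(1)$ constants inside a bracket of the form $J-2\log h_2-(\mathrm{const})$, and the four quantitative Lemmas \ref{lemma:cI}--\ref{lemma:int}, each requiring Poisson summation, controlled Riemann-sum remainders, explicitly evaluated integrals, and sign-definite rearrangements (the positivity of $R_{II}$, $R_{III}$ in the Appendix).

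Moreover, the single-mechanism strategy you propose --- dominating $E^*-E_1-E_2$ by the crossing cost everywhere --- degenerates in precisely the dangerous regime, which is why the paper does not use it globally. When the smaller side is near $h^*$ and the aspect ratio diverges (the limit $\lambda\to 0^+$ in the paper's parametrization), the per-site crossing cost vanishes linearly in $\lambda$ while the stripe-energy deficit $\mathcal E_s(h_1)+\mathcal E_s(h_2)-\mathcal E_s(h^*)$ vanishes at the same rate: in Lemma \ref{lemma:int} the bracket $\big(2-\log(27/16)\big)\lambda+\dots+\lambda^2\log\lambda+(1-\lambda)\log(1-\lambda)$ is nonnegative but tends to zero as $\lambda\to 0^+$, so the crossing-cost estimate cannot by itself exclude long tiles of almost-optimal width, nor tiles with $h_2$ far from $h^*$, where your claimed dominance of $|\mathcal E_s(h_2)|\sim 4h_2^{-1}\log(h_2/h^*)$ by a crossing cost of the very same order $h_2^{-1}\log(h_2/h_1)$ is a constant-level assertion that is nowhere verified and may simply be false. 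The paper's route is structurally different at this point: three separate local moves --- removing two horizontal walls (Lemma \ref{lemma:cI}), inserting two horizontal walls (Lemma \ref{lemma:cII}), removing two vertical walls (Lemma \ref{lemma:cIII}), each combined with a further application of \eqref{eq:RP} to convert a strict energy gain into a strict inequality between checkerboard energies --- first confine any candidate minimizer to the compact region $\mathcal R$ with $h\asymp e^{J/2}$ and $\lambda\geqslant\lambda_{\min}$, and only there is your splitting identity deployed, with the integrals $f(\lambda)$, $g(\lambda)$ computed in closed form and lower-bounded via concavity. Without this case decomposition and the attendant explicit constants ($c_I$, $c_{II}(\delta)$, $c_{III}(\delta)$, $c_{\min}$, $c_{\max}$, $\lambda_{\min}$), your proposal proves only the easy reductions, not the theorem.
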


The proof of this result, which is presented in the next sections, roughly goes as follows: first of all, we use BRP to prove that the minimizers of $\mathcal{H}_L$ within $\Omega_L$
are necessarily periodic checkerboard states $\bm{\sigma}_c(h_1,h_2)$ consisting of tiles all of sides $h_1,h_2$ in the two coordinate directions, and alternating signs (here $h_1,h_2$ 
are sides to be determined; we allow $h_1$ -- and/or $h_2$ -- to be infinite, in which case 
we identify $\bm{\sigma}_c(\infty,h)$ with $\bm{\sigma}_s(h)$).  Therefore, the problem is reduced to the proof that the minimizers of $\mathcal E(h_1,h_2):=\mathcal E(\bm{\sigma}_c(h_1,h_2))$ over $h_1,h_2\in\mathbb N\cup\{\infty\}$ are $(h_1,h_2)=(\infty,h^*)$ and $(h_1,h_2)=(h^*,\infty)$. 
While in principle such a minimization problem could be solved numerically, or via a computer-assisted proof, we are not aware of any discussion fully addressing this minimization
problem in the literature. The only partial discussion we are aware of in this regard is the one in \cite{MI1995}, where the authors prove numerically that $\min_h\mathcal E(h,h)>\mathcal E_s(h^*)$. It is unclear whether the method of \cite{MI1995} could be extended to prove that $\min_{h_1,h_2}\mathcal E(h_1,h_2)=\mathcal E_s(h^*)$. Even if it were, 
the numerical approach of \cite{MI1995} does not provide any conceptual understanding of why stripes are better than other periodic structures, not even of the square checkerboard
ones, $\bm{\sigma}_c(h,h)$. On the contrary, in the following sections we provide a fully analytic proof that $\min_{h_1,h_2}\mathcal E(h_1,h_2)=\mathcal E_s(h^*)$ and that the unique
minimizers are $(\infty,h^*)$ or $(h^*,\infty)$, by extending ideas introduced 
in \cite{GLL11} and used there to prove that, in the model with polynomial interactions $1/|\bm{x}-\bm{y}|^p$, $p>4$, 
periodic stripes of sufficiently large width $h$ have lower energy than periodic checkerboard with square tiles of side $h$. 
Our proof sheds some light on the reason why it is energetically favorable for the system to form stripes rather than square or rectangular tiles. 
In fact, our strategy consists in exhibiting different `moves' (modifications of the spin configuration $\bm{\sigma}=\bm{\sigma}_c(h_1,h_2)$ in which the restriction $\bm{\sigma}|_A$ to an 
appropriate set $A\subset \mathbb Z^2$ is flipped, while $\bm{\sigma}|_{A^c}$ is kept as is) that strictly decrease the energy, provided that the sides $h_1,h_2$ are in suitable ranges. 
In this way we exclude, for different reasons, that the minimizing sides $h_1^*,h_2^*$ are both too small, or both finite with a too big ratio, etc. We hope that, in perspective, similar 
moves can be used to locally decrease the energy of localized spin configurations, in the same spirit as the local moves that eliminate corners in the proof in \cite{GS16}. 

\section{Striped periodic nature of the constrained minimizers}

In this section we provide the proof of Theorem \ref{thm:main}. We assume $J$ to be sufficiently large and, for simplicity, to belong to the non-exceptional set of values for which 
$h^*=h^*(J)$ is the unique minimizer of $\mathcal E(h)$, the complementary case being left to the reader; see \cite{GLL06} for details about the determination of the minimizers of $\mathcal E(h)$ in the general case. 
For later reference, it is useful to recall here the asymptotic behavior of $h^*$ as $J\to\infty$, which follows from the following asymptotic evaluation of the energy of periodic striped states. 

\begin{lemma}\label{lemma:stripes}
	Asymptotically as $h\to\infty$, we have 	\begin{equation}
	 \mathcal{E}_s(h) = {2 \over h} \Big[ J - 2\log h - \alpha_{s} + \mathcal{O}\big(h^{-1}\big) \Big]\,, \qquad \mbox{for $h \to +\infty$}\,, \label{eq:estripes}
	\end{equation}
	where, denoting by $K_{1}$ the modified Bessel function of imaginary argument\footnote{If $x>0$, $K_1(x):=\frac1{2x}\int_{-\infty}^\infty \frac{e^{ixt}}{(t^2+1)^{3/2}}dt$, see \cite[Eq. 8.432.5]{GR2007}.\label{foot:bessel}}
of order $1$ and by $\gamma = 0.577...$ the Euler--Mascheroni constant,
	\begin{equation}\label{eq:as}
	\alpha_{s} := 2\, \Bigg(1 + \gamma - \log(\pi/2) + 4 \pi  \sum_{j = 1}^{+\infty} \sum_{n = 1}^{+\infty} j\,K_{1}(2\pi j n) \Bigg) = 2.276...\; .
	\end{equation}
\end{lemma}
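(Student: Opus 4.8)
The plan is to write $\mathcal E_s(h)$ as the sum of a ferromagnetic and a dipolar contribution to the energy per site and to analyze each separately, exploiting that $\bm\sigma_s(h)$ is translationally invariant along rows and $2h$-periodic in the vertical direction, so that the energy per site is the average over the $2h$ sites of a single vertical period. For the nearest-neighbor term this is immediate: all horizontal bonds are aligned, while over one vertical period there are exactly two domain walls, each crossed by a single vertical bond per column. Counting the anti-aligned bonds per site and using the normalization of $\mathcal H_L$, I expect the \emph{exact} value $2J/h$ for the ferromagnetic energy per site, which already produces the leading term $2J/h$ in \eqref{eq:estripes}.

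For the dipolar part I would write the energy per site as $e_{\mathrm{dip}}(h)=-\sum_{\bm r\neq\bm 0}p(r_2)/|\bm r|^3=-2\sum_{r_2=1}^{\infty}p(r_2)\,F(r_2)$, where $p(r_2)$ denotes the fraction of sites $\bm x$ for which $\bm x$ and $\bm x+\bm r$ lie in oppositely magnetized stripes, and $F(r_2):=\sum_{r_1\in\mathbb Z}(r_1^2+r_2^2)^{-3/2}$. A direct count shows that $p$ is the triangle wave of period $2h$ with $p(r_2)=r_2/h$ for $0\le r_2\le h$; in particular $p(0)=0$, so same-row pairs drop out. The crucial step is to resum $F(r_2)$ by Poisson summation in the translationally invariant variable $r_1$: using the integral representation of $K_1$ recalled in footnote \ref{foot:bessel}, I expect
\[
F(r_2)=\frac{2}{r_2^2}+\frac{8\pi}{r_2}\sum_{k=1}^{\infty}k\,K_1(2\pi k r_2),
\]
which splits $e_{\mathrm{dip}}(h)$ into a ``main'' part (the $2/r_2^2$ term) and a ``Bessel'' part.

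Because $K_1$ decays exponentially, the Bessel part is concentrated on $r_2=O(1)$, where $p(r_2)=r_2/h$ holds exactly; substituting this and extending the $r_2$-sum to infinity, at the cost of an error exponentially small in $h$, should yield precisely $-\tfrac{16\pi}{h}\sum_{j,n\ge1}j\,K_1(2\pi j n)$, reproducing the Bessel double sum in $\alpha_s$. For the main part, $-4\sum_{r_2\ge1}p(r_2)/r_2^2$, I would separate $r_2\le h$ from $r_2>h$. The first range contributes $\tfrac1h\sum_{r_2=1}^{h}\tfrac1{r_2}=\tfrac1h(\log h+\gamma)+O(h^{-2})$, while the tail is captured by $\tfrac1h\int_1^{\infty}\tau(u)/u^2\,du$ with $\tau$ the continuous triangle wave of period $2$. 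Evaluating this integral over consecutive unit intervals, the logarithmic pieces combine, through the alternation between even and odd intervals, into $\log\prod_{k\ge1}(4k^2-1)/(4k^2)=-\log(\pi/2)$ by the Wallis product, and the remaining rational pieces telescope to $1$. Altogether $\sum_{r_2\ge1}p(r_2)/r_2^2=\tfrac1h\big(\log h+\gamma+1-\log(\pi/2)\big)+O(h^{-2})$, and combining the main part, the Bessel part, and the ferromagnetic $2J/h$ reproduces \eqref{eq:estripes} with the constant $\alpha_s$ of \eqref{eq:as}.

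The step I expect to be the main obstacle is the precise evaluation of the constant in the main part, rather than the leading $\log h$. Two points require genuine care: the appearance of $-\log(\pi/2)$ through the Wallis product in the tail integral, and the control of the Euler--Maclaurin corrections relating the tail sum to its integral, especially near the corner $r_2=h$ of the triangle wave, which must be shown not to contaminate the $O(h^{-1})$ constant and to leave a remainder genuinely of order $O(h^{-2})$ inside the bracket. By contrast, the ferromagnetic count, the Poisson resummation, and the localization of the Bessel part on short distances should be comparatively routine.
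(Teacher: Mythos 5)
Your proposal is correct and follows essentially the same route as the paper's proof: the exact ferromagnetic contribution $2J/h$, the dipolar energy written against the period-$2h$ triangle-wave anti-alignment profile $p(r_2)$, Poisson summation in the horizontal variable giving $F(r_2)=2/r_2^2+(8\pi/r_2)\sum_{j\ge1}jK_1(2\pi j r_2)$, the harmonic number $H_h$ from the range $r_2\le h$, exponentially small errors from extending the Bessel sums, and the tail evaluated as a Riemann sum of an integral whose logarithmic pieces give $-\log(\pi/2)$ by the Wallis product and whose rational pieces telescope to $1$ (this is exactly the paper's computation of $-2\sum_{\ell\ge0}\int_{-1}^1|\xi|\,(2\ell+2+\xi)^{-2}d\xi=-2+2\log(\pi/2)$). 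All your intermediate constants match the paper's (in particular the tail value $\sum_{r_2>h}p(r_2)/r_2^2=h^{-1}\big(1-\log(\pi/2)\big)+\mathcal O(h^{-2})$), and the corner issue you flag at $r_2=h$ is precisely the ``straightforward bound on the difference between the Riemann sum and the integral'' invoked in the paper.
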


\begin{remark} An evaluation of the constant $\alpha_s$ in \eqref{eq:estripes}, based on a numerical fit, was performed in \cite{MI1995}, without providing a closed expression for 
$\alpha_s$ (for comparison with \cite{MI1995}, note that a slightly different normalization of the initial Hamiltonian was employed there).
\end{remark}

Note that from the asymptotic formula \eqref{eq:estripes}, it is easy to deduce the asymptotic behavior of $h^*$, which turns 
out to be: 
	\begin{equation}\label{eq:hst}
	 h_{*} :=  c_{*}\, e^{J/2}(1+O(e^{-J/4}))\,, \qquad c_{*} := e^{1 - {\alpha_{s} \over 2}} = 0.871...\;.
	\end{equation}

\begin{proof}[Proof of Lemma \ref{lemma:stripes}]
By direct evaluation we obtain: 
\begin{equation}
\mathcal E_s(h)=\frac{2J}h-\frac2{h}\sum_{n_1\in\mathbb Z} \Bigg(\sum_{n_2=1}^h\frac{n_2}{(n_1^2+n_2^2)^{3/2}}+\sum_{\ell=0}^\infty\ \sum_{n_2 \,=\, (2\ell+1) h+1}^{(2\ell+3)h} {|n_2 - (2\ell+2) h| \over (n_1^2+n_2^2)^{3/2}}
\Bigg).\end{equation}
If we now apply Poisson summation formula to the sum over $n_1 \in \mathbb{Z}$, recalling the definition of $K_1$ in footnote \ref{foot:bessel}, letting $H_{h} := \sum_{n = 1}^h\! 1/n$ be the $h$-th harmonic number, and using the fact that $\int_{-\infty}^\infty (t^2+1)^{-3/2} dt=2$,  we find:
\begin{equation}\label{aligned}\begin{aligned}
 \mathcal{E}_s(h) & = {2 \over h} \Bigg[ J - 2\,H_{h} - 8 \pi\! \sum_{j_1 \,=\, 1}^{+\infty} \sum_{n_2 \,=\, 1}^{h} j_1 K_{1}(2\pi j_1 n_2)\\
& \hspace{1.08cm} -2 \sum_{\ell\,=\,0}^{+\infty}\ \sum_{n_2 \,=\, (2\ell+1) h+1}^{(2\ell+3)h}\! {|n_2 - (2\ell+2) h| \over n_2^2}\\
& \hspace{1.08cm} -8 \pi\! \sum_{j_1 \,=\, 1}^{+\infty}  \sum_{\ell\,=\,0}^{+\infty}\ \sum_{n_2 \,=\, (2\ell+1)h+1}^{(2\ell+3)h}\!\! {|n_2 - (2\ell+2) h| \over n_2}\, j_1 K_{1}(2\pi j_1 n_2)\Bigg]\,.
\end{aligned}\end{equation}
Now, in the limit $h \to +\infty$, we can rewrite $H_{h} = \log h + \gamma + \mathcal{O}\big(h^{-1}\big)$, with $\gamma = 0.577...$ the Euler constant, see \cite[Eq. 0.131]{GR2007}. Moreover, using the fact that 
$0 < \sqrt{z}\,e^{z}\,K_1(z) \leqslant C_1$ for any $z \in [1,+\infty)$ and a suitable $C_1$ \cite[Ch.\,10, Eq.\,10.40.2]{NIST}, one finds that, as $h\to\infty$, 
$\sum_{j_1 \,=\, 1}^{+\infty} \sum_{n_2 \,=\, 1}^{h} j_1 K_{1}(2\pi j_1 n_2)=
\sum_{j_1 \,=\, 1}^{+\infty} \sum_{n_2 \,=\, 1}^{\infty} j_1 K_{1}(2\pi j_1 n_2)+O(h^{-1/2}e^{-2h})$, and that the triple sum in the last line of \eqref{aligned} is $O(h^{-1/2}e^{-2h})$. Finally, 
noting that the double sum in the second line can be rewritten as $-\frac2{h}\sum_{\ell\ge 0}\sum_{\substack{\xi=-1+j/h:\\ j=1,\ldots,2h}}\frac{|\xi|}{(2\ell+2+\xi)^2},$ which is a Riemann sum approximation of 
\begin{equation*}\begin{split}-2\sum_{\ell\ge 0}\int_{-1}^1 d\xi\,\frac{|\xi|}{(2\ell+2+\xi)^2}&=-2\sum_{\ell\,=\,0}^{+\infty} \left[ {2 \over (2 \ell + 2)^2 -1} + \log\left({(2 \ell+ 3)(2\ell+1) \over (2 \ell+ 2)^2}\right) \right]\\
&= -2 +2 \log(\pi/2),\end{split}\end{equation*}
via a straightforward bound on the difference between the Riemann sum and the integral, we find that the double sum in the second line of \eqref{aligned} equals
$-2+2\log(\pi/2) +  \mathcal{O}\big(h^{-1}\big)$. Putting things together, we get the announced result \eqref{eq:estripes}.
\end{proof}

\medskip

Now, consider any state $\bm{\sigma}$ belonging to the variational class $\Omega_L$ under analysis, which was defined right before the statement of Theorem \ref{thm:main}. Let $\Theta(\bm{\sigma})$ be the set of all 
rectangular tiles forming such a state, and let $h_{1}(T)$ and $h_{2}(T)$ denote, respectively, the width and the height of the tile $T \in \Theta(\bm{\sigma})$. A consequence of BRP and, more specifically, of the "Chessboard estimate with open boundary conditions" proved in the appendix of \cite{GLL07}, to be applied here first in the horizontal and then in the vertical direction, is the following lower bound: 
\begin{equation}\label{eq:RP}
\mathcal{H}_{L}(\bm{\sigma}) \geqslant \sum_{T \in \Theta(\bm{\sigma})} |T| \;\mathcal{E}(h_1(T),h_2(T))\,,
\end{equation}
where $|T| = h_1(T)\,h_2(T)$ denotes the area of the single tile $T$, and $\mathcal{E}(h_1,h_2)$ was defined a few lines after the statement of Theorem \ref{thm:main}. 

\begin{remark}
The proof of \eqref{eq:RP} via a two-steps iteration of the chessboard estimate with open boundary conditions crucially requires the set of Peierls contours of $\bm{\sigma}$ to be a union of straight lines: this is 
where we use the structure of the variational class $\Omega_L$ and where the most serious limitation in our main result comes from.
\end{remark}

Our goal is now to show that, for any $(h_1,h_2)\neq(h^*,\infty), (\infty, h^*)$, we have 
\begin{equation}\label{eq:lower}\mathcal{E}(h_1,h_2)>\mathcal E_s(h^*).\end{equation}
If this is the case, taking $L$ to be an integer multiple of $2h^*$ and recalling 
that in this case $\mathcal E_s(h^*)= L^{-2}\,\mathcal H_L(\bm{\sigma}_s(h^*))$, then, in view of \eqref{eq:RP}, we find that $\bm{\sigma}_s(h^*)$ is the unique minimizer of $\mathcal H_L$ in the variational class $\Omega_L$, modulo 
translations and discrete rotations, as desired. The rest of this section is devoted to the proof of \eqref{eq:lower}. Since $\mathcal E(h_1,h_2)=\mathcal E(h_2,h_1)$, with no loss of generality we can assume that 
$$ h_1 \geqslant h_2\,, $$
and we shall do so from now on. 

\subsection{A priori estimates}

Hereafter we proceed to derive constraints on the sides $h_1,h_2$ of the tiles forming an alleged checkerboard state $\bm{\sigma}_{c}(h_1,h_2)$ of minimal energy. To attain this goal, we implement the following general strategy: we compare and estimate the energies of pairs of configurations which only differ by flipping the spins in suitable regions, identified by the insertion or the removal of parallel domain walls. 

\subsubsection{Excluding thin tiles}

First of all, we expect that, whenever the height $h_2 =\min$ $\{h_1,h_2\}$ of the rectangular tiles forming $\bm{\sigma}_{c}(h_1,h_2)$ is too small, we can lower the energy 
by eliminating two neighboring horizontal domain walls. As a consequence (see the proof of the following lemma), for $h_2$ too small, the energy 
of $\bm{\sigma}_{c}(h_1,h_2)$ is strictly larger than the one of $\bm{\sigma}_{c}(h_1,3h_2)$. 

\begin{lemma}\label{lemma:cI}
For all $J > 0$, if 
\begin{equation}\label{eq:lemmacIa}
h_1 \!\geqslant\! h_2\,, \quad 1 \!\leqslant\! h_2 \!\leqslant \!c_{I}\,e^{J/2},
\end{equation}
where, recalling the definition of $\alpha_s$ in \eqref{eq:as},
\begin{equation}\label{eq:lemmacIb}
c_{I} := {2 \over \pi \sqrt{e}}\, e^{- \alpha_{s}/2} = 0.123\,...\;.
\end{equation}
then 
\begin{equation}
\mathcal{E}(h_1,h_2) > \mathcal{E}(h_1,3h_2).
\end{equation}
\end{lemma}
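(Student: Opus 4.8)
The plan is to compare the two periodic configurations $\bm{\sigma}_c(h_1,h_2)$ and $\bm{\sigma}_c(h_1,3h_2)$ directly, viewing the latter as obtained from the former by the ``move'' that flips, inside every consecutive triple of height-$h_2$ horizontal bands, the central band, thereby erasing two out of every three horizontal domain walls and leaving the vertical ones untouched. First I would split $\mathcal{E}(h_1,h_2)-\mathcal{E}(h_1,3h_2)$ into its nearest-neighbor and dipolar parts. The nearest-neighbor part is elementary: the horizontal-wall density drops from $1/h_2$ to $1/(3h_2)$, so this contribution equals $\tfrac{4J}{3h_2}>0$ per site and favors the claimed inequality. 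Everything then reduces to showing that the \emph{dipolar cost} of removing the walls, $\mathcal{D}(h_1,3h_2)-\mathcal{D}(h_1,h_2)$ where $\mathcal{D}:=\mathcal{E}-(\text{n.n. part})$, is strictly smaller than $\tfrac{4J}{3h_2}$.

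To estimate the dipolar part I would exploit the factorized correlation structure of checkerboard states. With $\sigma_{\bm{x}}=(-1)^{\lfloor x_1/h_1\rfloor+\lfloor x_2/h_2\rfloor}$, the two-point average factorizes as $\langle\sigma_{\bm{x}}\sigma_{\bm{x}+\bm{z}}\rangle=C_{h_1}(z_1)\,C_{h_2}(z_2)$, where $C_h$ is the triangular autocorrelation of the square wave of period $2h$, so that $\mathcal{D}(h_1,h_2)=\tfrac12\sum_{\bm{z}\neq\bm{0}}|\bm{z}|^{-3}\big(C_{h_1}(z_1)C_{h_2}(z_2)-1\big)$. Since $h_1$ is unchanged by the move and the $-1$ cancels in the difference, summing over $z_1$ first gives
\[
\mathcal{D}(h_1,h_2)-\mathcal{D}(h_1,3h_2)=\sum_{z_2\neq 0}\big(C_{h_2}(z_2)-C_{3h_2}(z_2)\big)\,G(z_2),\qquad G(z_2):=\frac12\sum_{z_1\in\mathbb{Z}}\frac{C_{h_1}(z_1)}{(z_1^2+z_2^2)^{3/2}}.
\]
The key observation, which decouples the bound from $h_1$, is the two-sided estimate $0\le G(z_2)\le G_s(z_2)$ valid for every $h_1$, where $G_s(z_2):=\tfrac12\sum_{z_1}(z_1^2+z_2^2)^{-3/2}$ is the stripe ($h_1=\infty$, $C_{h_1}\equiv1$) weight. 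The upper bound is immediate from $C_{h_1}\le1$ and positivity of the kernel; for the lower bound I would expand $C_{h_1}$ in its triangular-wave Fourier series (whose coefficients are $\propto m^{-2}$ on odd frequencies, hence nonnegative) and apply Poisson summation in $z_1$, using that the Fourier transform of $(z_1^2+z_2^2)^{-3/2}$ is the nonnegative function $\tfrac{4\pi|\xi|}{|z_2|}K_1(2\pi|\xi|\,|z_2|)$, i.e.\ exactly the Bessel transform already appearing in the proof of Lemma~\ref{lemma:stripes}.

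Because $G\ge0$, I would then discard the $z_2$ on which $C_{h_2}(z_2)-C_{3h_2}(z_2)>0$ (they contribute nonpositively to the cost) and use $G\le G_s$ on the rest, obtaining an upper bound on the dipolar cost that no longer depends on $h_1$:
\[
\mathcal{D}(h_1,3h_2)-\mathcal{D}(h_1,h_2)\ \le\ \sum_{z_2:\,C_{3h_2}(z_2)>C_{h_2}(z_2)}\big(C_{3h_2}(z_2)-C_{h_2}(z_2)\big)\,G_s(z_2)\ =:\ \mathcal{C}^{\ast}(h_2).
\]
It remains to evaluate $\mathcal{C}^{\ast}(h_2)$: determining the sign regions of the difference of the two triangular waves over a full period $6h_2$ and summing $G_s$ over them by Poisson summation — replaying the harmonic-number/Bessel analysis of Lemma~\ref{lemma:stripes} — should give
\[
\mathcal{C}^{\ast}(h_2)=\frac{4}{3h_2}\Big[2\log h_2+1+2\log(\pi/2)+\alpha_s+\mathcal{O}(h_2^{-1})\Big],
\]
whence $\mathcal{C}^{\ast}(h_2)<\tfrac{4J}{3h_2}$ is equivalent to $h_2<c_I\,e^{J/2}$ with $c_I=\tfrac{2}{\pi\sqrt{e}}\,e^{-\alpha_s/2}$, exactly as stated.

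I expect the main difficulty to lie precisely in this last step: extracting the constant with its full $1+2\log(\pi/2)$ shift (the term that lowers $c_I$ below the larger stripe-only threshold $\sqrt{3}\,e^{-\alpha_s/2}$, produced by the discarded negative-bracket terms), and doing so with the error controlled \emph{uniformly} in $h_2\ge1$, since the statement is asserted for all $J>0$ and hence cannot rest on leading asymptotics alone. The bookkeeping of the sign regions of $C_{3h_2}-C_{h_2}$ over the period $6h_2$, together with the exponentially small but nonnegligible Bessel corrections to $G_s$ at small $z_2$, is what makes this evaluation delicate; the clean reduction to the $h_1$-free quantity $\mathcal{C}^{\ast}(h_2)$ via $0\le G\le G_s$ is what makes it tractable.
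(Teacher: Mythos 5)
Your route is genuinely different from the paper's, and its skeleton is correct. The paper never compares the two checkerboard energies directly: it flips a \emph{single} row of tiles of $\bm{\sigma}_c(h_1,h_2)$, obtaining a configuration $\bm{\sigma}_I\in\Omega_L$ whose energy is strictly lower for $1\leqslant h_2\leqslant c_I e^{J/2}$ --- proved by discarding the favorable interaction terms and bounding the interaction of one flipped tile with a half-plane via the same Poisson/Bessel analysis as in Lemma~\ref{lemma:stripes}, which yields exactly the bracket $J-2\log h_2-\alpha_s-1-2\log(\pi/2)$ --- and then, crucially, invokes the chessboard estimate \eqref{eq:RP} to bound $\mathcal H_L(\bm{\sigma}_I)$ from below by a convex combination of $\mathcal E(h_1,3h_2)$ and $\mathcal E(h_1,h_2)$, whence the lemma. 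By flipping the middle band of every triple you land exactly on $\bm{\sigma}_c(h_1,3h_2)$, so you bypass \eqref{eq:RP} entirely; your exact factorization $C_{h_1}(z_1)C_{h_2}(z_2)$, the reduction to a one-dimensional sum against $G$, and the two-sided bound $0\leqslant G\leqslant G_s$ (nonnegativity of the odd-frequency $\propto m^{-2}$ Fourier coefficients of the triangular wave, positivity of the $K_1$ transform, Poisson summation) are all valid, and the absolute summability of $|\bm{z}|^{-3}$ away from the origin justifies the exchanges of sums. What you pay is that the periodically repeated flipped bands interact with one another, i.e.\ the sign-region bookkeeping of $D:=C_{3h_2}-C_{h_2}$; what the paper pays is the reliance on \eqref{eq:RP}, but its local-move template is reused in Lemmas~\ref{lemma:cII}--\ref{lemma:cIII}, where the post-move configuration is \emph{not} a pure checkerboard and a direct computation like yours would not close.

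The one step you did not carry out is the evaluation of $\mathcal C^{\ast}(h_2)$, and the formula you assert for it is not what the computation gives --- it reads as reverse-engineered from $c_I$. Concretely, $D$ has period $6h_2$, is even, and is positive precisely on the islands $\big(6kh_2-\tfrac32 h_2,\,6kh_2+\tfrac32 h_2\big)$, with $D(z)=\tfrac{4z}{3h_2}$ on $[0,h_2]$ and $D(z)=4-\tfrac{8z}{3h_2}$ on $[h_2,\tfrac32 h_2]$, so $D(z)\leqslant \tfrac{4z}{3h_2}$ on the whole central island. Writing $G_s(z)=z^{-2}+4\pi z^{-1}\sum_{j\geqslant1}jK_1(2\pi jz)$, the central island contributes $\tfrac{8}{3h_2}\big[H_{h_2}+1-2\log\tfrac32+4\pi\Sigma+\mathcal O(h_2^{-1})\big]$ with $\Sigma=\sum_{j,n\geqslant1}jK_1(2\pi jn)$, and the islands $k\geqslant1$ contribute at most roughly $\tfrac{8}{3h_2}\cdot 0.1$; since $2+2\gamma+8\pi\Sigma=\alpha_s+2\log(\pi/2)$, this gives $\mathcal C^{\ast}(h_2)\leqslant\tfrac{4}{3h_2}\big[2\log h_2+\alpha_s+2\log(\pi/2)-4\log\tfrac32+\mathcal O(1)_{\rm small}\big]$, i.e.\ the additive constant is about $-1.4$, not the $+1$ you claim. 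The error is in the safe direction --- you overestimate the cost --- so your scheme, once completed, proves the lemma with a threshold roughly three times \emph{larger} than $c_I$; but the displayed equality is false and must be replaced by an upper bound (and your narrative about the discarded negative terms producing exactly the shift $1+2\log(\pi/2)$ is accordingly not accurate). The remaining genuine obligation is the one you flag yourself: uniformity in $h_2\geqslant1$. This is no real obstacle, since every step above can be made non-asymptotic exactly as in the paper, via $H_h\leqslant\log h+\gamma+\tfrac1{2h}$, monotone sum--integral comparisons on $[h_2,\tfrac32 h_2]$ and on the far islands, and retention of the full Bessel sum; but as written your proposal asserts rather than performs this decisive computation.
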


\begin{proof}[Proof of Lemma \ref{lemma:cI}]
Let $L$ be an integer that is divisible both by $h_1$ and $h_2$.  
Consider the checkerboard state $\bm{\sigma}_{c}(h_1,h_2)$ and denote by $\bm{\sigma}_{I}$ the spin configuration obtained by removing two horizontal domain walls, namely, flipping all spins in a given row (see Fig. \ref{fig:AP1}). 
\begin{figure}[t!]
\subfloat[The grey regions represent the positive tiles of the configuration $\bm{\sigma}_{I}$ described in the proof of Lemma \ref{lemma:cI}.]{\includegraphics[width=.45\textwidth]{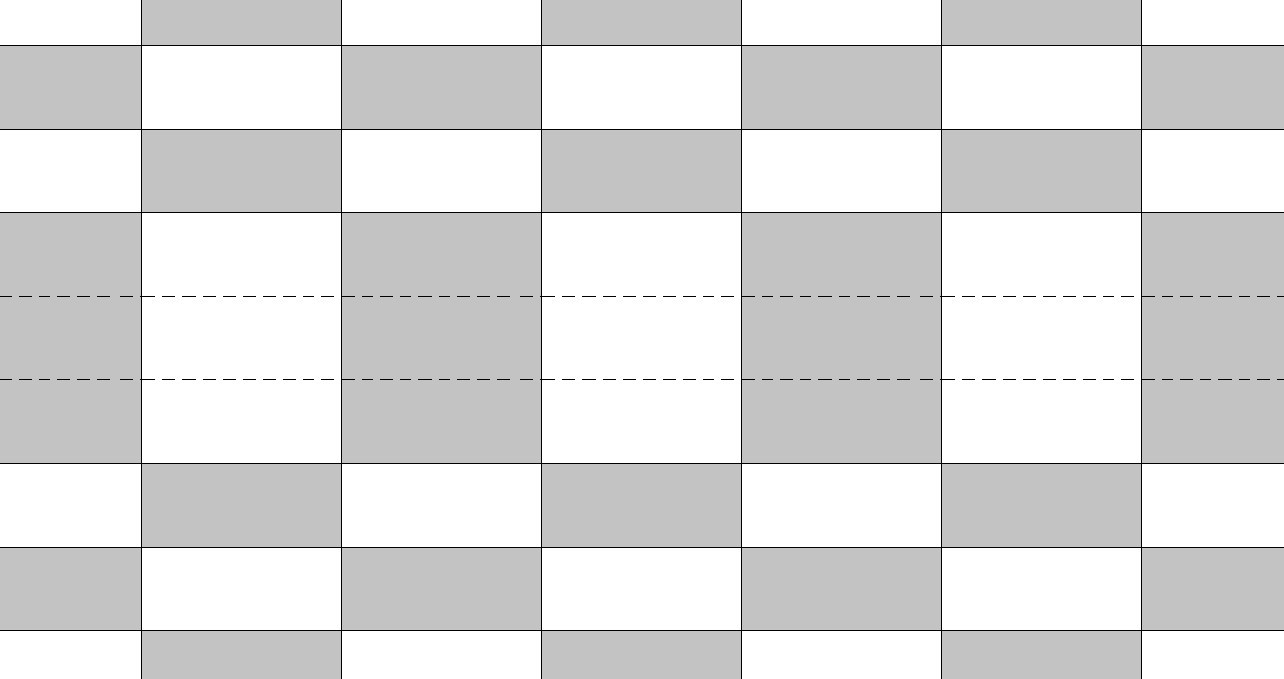}\label{fig:AP1}}
\qquad
\subfloat[The regions $T$ and $\Pi$ considered in the proof of Lemma \ref{lemma:cI}.]{\includegraphics[width=.45\textwidth]{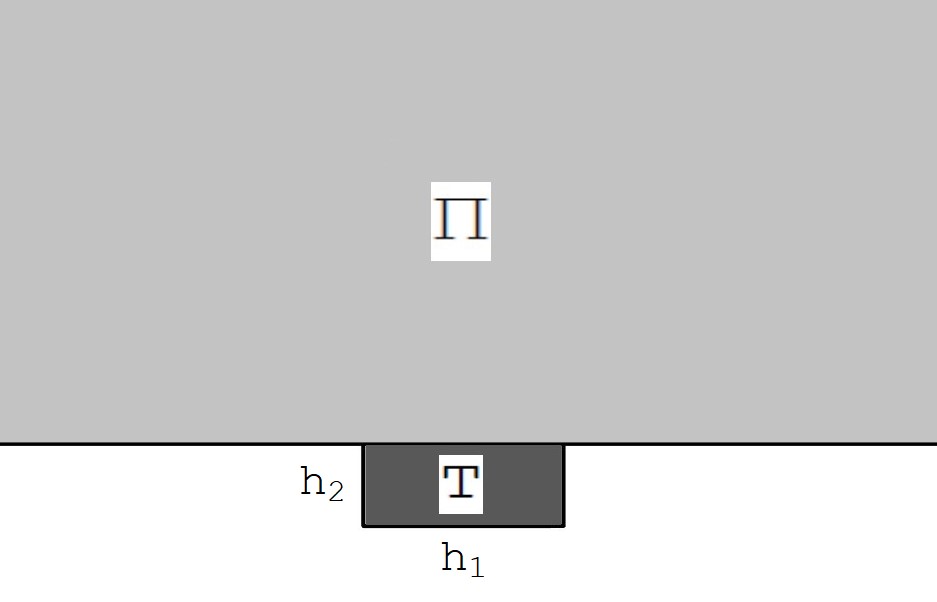}\label{fig:AP1TPi}}
\caption{}
\end{figure}
Let $S^{+}$ (resp. $S^{-}$) be the union of positive (resp. negative) spin tiles of $\bm{\sigma}_c(h_1,h_2)$ contained in $\Lambda_L$ and belonging to the row subject to flipping. 
Moreover, let $\Delta_{e}^{+}$ (resp. $\Delta_{e}^{-}$) be the union of positive (resp. negative) spin tiles of $\bm{\sigma}_c(h_1,h_2)$ contained in $\Lambda_L$ 
which remain unaltered under the flipping. Let also $\Delta\big(\bm{\sigma}_{c}(h_1,h_2)\big) = \Delta_{e}^{+} \cup S^{+}$ and $\Delta^{c}\big(\bm{\sigma}_{c}(h_1,h_2)\big) = \Delta_{e}^{-} \cup S^{-}$ be the union of positive and negative tiles of $\bm{\sigma}_c(h_1,h_2)$ contained in $\Lambda_L$, respectively; similarly, we let $\Delta(\bm{\sigma}_{I}) = \Delta_{e}^{+} \cup S^{-}$ and $\Delta^{c}(\bm{\sigma}_{I}) = \Delta_{e}^{-} \cup S^{+}$  be the union of positive and negative tiles of $\bm{\sigma}_I$, respectively. By direct inspection, using also the 
spin flip symmetry of the energy, we get
\begin{equation*}
\mathcal{H}_{L}\big(\bm{\sigma}_{c}(h_1,h_2)\big) - \mathcal{H}_{L}(\bm{\sigma}_{I})  = 4 J L - 4\, \sum_{\bm{x} \in S^{+}}\Bigg(  \sum_{\bm{y} \in \Delta_{e}^{-}} - \sum_{\bm{y} \in \Delta_{e}^{+}} \Bigg) \sum_{\bm{m} \in \mathbb{Z}^2} {1 \over |\bm{x} - \bm{y} + L \bm{m}|^3}\;.
\end{equation*}
From here, noting that $S^{+}$ consists of $L/(2h_1)$ tiles of size $h_1 \times h_2$, discarding positive contributions, and recalling that $\mathcal E(h_1,h_2)=L^{-2}\mathcal H_L(\bm{\sigma}_c(h_1,h_2))$, we deduce
\begin{equation*}
\mathcal E(h_1,h_2) -\frac1{L^2} \mathcal{H}_{L}(\bm{\sigma}_{I}) > \frac4L\, \Bigg( J - {1 \over h_1} \sum_{\bm{x} \in T}\, \sum_{\bm{y} \in \Pi}\, {1 \over |\bm{x} - \bm{y}|^3}\Bigg) \;,
\end{equation*}
where $T$ is any one of the tiles belonging to $S^{+}$ and $\Pi$ is the half-plane touching $T$ along one of the sides of length $h_1$ (see Fig. \ref{fig:AP1TPi}).
\\
Using Poisson summation formula, and proceeding in a way similar to that described in the proof of Lemma \ref{lemma:stripes}, we infer
\begin{equation}\begin{aligned}\label{eqqo} {1 \over h_1} \sum_{\bm{x} \in T} \sum_{\bm{y} \in \Pi} {1 \over |\bm{x} - \bm{y}|^3} 
& = \sum_{n_1 \in \mathbb{Z}} \sum_{n_2=1}^\infty{\min\{n_2,h_2\} \over (n_1^2+n_2^2)^{3/2}}  \\
& =  \sum_{n_2=1}^\infty\min\{n_2,h_2\} \Bigg(\frac2{n_2^2}+8\pi\sum_{j_1=1}^\infty\frac{j_1}{n_2}K_1(2\pi j_1 n_2)\Bigg)\\
&  \leqslant 2 H_{h_2} + 2h_2 \int_{h_2}^{+\infty} \!{d\eta \over \eta^2} + 8 \pi \sum_{j_1 \,=\,1}^{+\infty} \sum_{n_2\,=\,1}^{+\infty} j_1 K_{1}(2\pi j_1 n_2),\end{aligned}\end{equation}
where in the last line we used the definition of $H_h$, the fact that $\sum_{n=h+1}^\infty n^{-2}<\int_h^\infty x^{-2} dx$, and the positivity of $K_1(x)$ for $x>0$. 
Using the fact that, for all $h\ge 1$, $H_{h} \leqslant \log h + \gamma + {1 \over 2h}$, see \cite[Eq. 0.131]{GR2007}), we find that the last line of \eqref{eqqo}
is bounded from above by $2\log h_2 +2\gamma+2+ 8 \pi \sum_{j,n\ge 1} j K_{1}(2\pi j n)+\frac1{h_2}$. Therefore, recalling the definition \eqref{eq:as} of $\alpha_s$, we find that, for 
$h_2 \geqslant 1$, 
\begin{equation}\label{eqqo.2}
\mathcal E(h_1,h_2) -\frac1{L^2} \mathcal{H}_{L}(\bm{\sigma}_{I}) > \frac4L\, \Big( J -2\log h_2 -\alpha_s-1-2\log(\pi/2)\Big)\;,
\end{equation}
which in turn implies (cf. Eq. \eqref{eq:lemmacIa})
\begin{equation*}
\mathcal E(h_1,h_2)>\frac1{L^2} \mathcal{H}_{L}(\bm{\sigma}_{I})  \,, 
\hspace{0.7cm} \mbox{for all} \hspace{0.7cm}
h_1 \geqslant h_2\,, \quad 1 \leqslant h_2 \leqslant {2 \over \pi \sqrt{e}}\, e^{(J - \alpha_{s})/2} .
\end{equation*}
To conclude, notice that the restriction of $\sigma_{I}$ to $\Lambda_L$ consists of $L/h_1$ tiles of size $h_1 \times 3 h_2$, and of $L^2/(h_1 h_2) - 3L/h_1$ tiles of size $h_1 \times h_2$. Therefore, using \eqref{eq:RP}, we infer, for any $h_1,h_2$ as in Eq. \eqref{eq:lemmacIa},
\begin{equation*}\begin{aligned}
\mathcal E(h_1,h_2)&> \frac1{L^2}\Bigg[\frac{L}{h_1}3h_1h_2\mathcal E(h_1,3h_2)+ \Big(\frac{L^2}{h_1h_2}-\frac{3L}{h_1}\Big)h_1h_2\mathcal E(h_1,h_2)\Bigg]\\
&=\frac{3h_2}L \mathcal E(h_1,3h_2)+\Big(1-\frac{3h_2}L\Big)\mathcal E(h_1,h_2),\end{aligned}\end{equation*}
from which the thesis  readily follows.
\end{proof}

\subsubsection{Excluding thick tiles}

Next, we expect that, if $h_1,h_2$ are both too large, then we can lower the energy of $\bm{\sigma}_{c}(h_1,h_2)$ by creating two extra horizontal domain walls between two 
neighboring pairs thereof. As a consequence (see the proof of the following lemma), if $h_1$ and $h_2$ are both too large, the energy 
of $\bm{\sigma}_{c}(h_1,h_2)$ is strictly larger than the average of the one of $\bm{\sigma}_{c}(h_1,\lfloor h_2/2\rfloor)$ and that of 
$\bm{\sigma}_{c}(h_1,\lceil h_2/2\rceil)$. 

\begin{lemma}\label{lemma:cII}
There exists a (large, compared to 1) positive constant $J_{II}$ such that, for any $0 < \delta \leqslant 1$, $J > J_{II}$ and 
\begin{equation}
 c_{II}(\delta)\,e^{J/2}\! \leqslant h_2 \leqslant \delta\,h_1 , \label{eq:lemmacIIa}
 \end{equation}
 with 
\begin{equation}\label{eq:lemmacIIb}
c_{II}(\delta) := {129 \over 9\pi}\, e^{- (\alpha_s/2) \,+\, 4 \delta} = (1.461\,...\,)\,e^{4\delta} ,
\end{equation}
then 
\begin{equation}
\mathcal{E}(h_1,h_2) > \frac12\Big(\mathcal{E}(h_1,\lfloor h_2/2\rfloor)+\mathcal{E}(h_1,\lceil h_2/2\rceil)\Big)\,.
\end{equation}
\end{lemma}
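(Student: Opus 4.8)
The plan is to mirror the strategy of Lemma \ref{lemma:cI}, but now \emph{inserting} rather than removing domain walls. Starting from $\bm{\sigma}_c(h_1,h_2)$, I would define a competitor $\bm{\sigma}_{II}$ by flipping all spins in a single horizontal strip $R$ of height $h_2$ and width $L$, positioned so that it straddles one of the horizontal domain walls of $\bm{\sigma}_c(h_1,h_2)$, occupying the $\lfloor h_2/2\rfloor$ rows immediately below that wall and the $\lceil h_2/2\rceil$ rows immediately above it. A direct inspection shows that this flip leaves the straddled wall in place (both of its sides lie inside $R$) but creates two new horizontal walls, at the lower and upper boundaries of $R$; consequently the two tiles adjacent to the straddled wall get subdivided into four tiles of heights $\lceil h_2/2\rceil,\lfloor h_2/2\rfloor,\lceil h_2/2\rceil,\lfloor h_2/2\rfloor$, while the rest of the configuration is untouched.

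The next step is to evaluate the energy gap. Exactly as the two walls removed in the proof of Lemma \ref{lemma:cI} \emph{saved} $4JL$ of exchange energy, the two walls we create here \emph{cost} $4JL$, while the dipolar energy decreases. Writing the flip explicitly and using the spin-flip symmetry of the dipolar kernel, I would obtain
\begin{equation*}
\mathcal{H}_L(\bm{\sigma}_c(h_1,h_2)) - \mathcal{H}_L(\bm{\sigma}_{II}) = -4JL + 2\,W, \qquad W := \sum_{\bm{x}\in R}\sum_{\bm{y}\notin R}\sum_{\bm{m}\in\mathbb{Z}^2}\frac{\sigma_{\bm{x}}\sigma_{\bm{y}}}{|\bm{x}-\bm{y}+L\bm{m}|^3},
\end{equation*}
where $\sigma$ denotes the original checkerboard configuration. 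Since the two boundaries of $R$ separate regions carrying \emph{equal} sign in $\bm{\sigma}_c(h_1,h_2)$ (this is precisely why flipping creates walls there), the dominant contributions to $W$ are positive, and the whole matter reduces to proving the sharp lower bound $W > 2JL$ in the stated parameter range.

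The heart of the proof, and its main obstacle, is this lower bound on $W$. I would apply the Poisson summation formula in the direction along the walls, just as in the proofs of Lemmas \ref{lemma:stripes} and \ref{lemma:cI}, to rewrite $W$ as a one-dimensional sum plus exponentially small Bessel corrections. The leading term comes from the two new walls and is, per unit length, of the form $2\log h_2 + \text{const}$ for each wall, so that $W\approx L\,(4\log h_2 - \text{const})$; recalling $h_2\sim c_* e^{J/2}$, this beats $2JL$ precisely when $h_2$ exceeds a threshold of order $e^{J/2}$, which is the origin of the constant $c_{II}(\delta)$ and of its $e^{-\alpha_s/2}$ factor. Two features make this estimate more delicate than the one in Lemma \ref{lemma:cI}. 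First, we now need a \emph{lower} bound on a sum with genuine sign cancellations — the same-sign regions flanking each new wall have only finite extent $\lfloor h_2/2\rfloor$ or $\lceil h_2/2\rceil$ before the sign reverses — so the negative tails must be controlled from above rather than simply discarded as in \eqref{eqqo}; this is also why one must take $J$ (hence $h_2$) large, i.e. $J>J_{II}$, for the asymptotics to be effective. Second, the hypothesis $h_2\le\delta h_1$ enters here: it guarantees that the opposite-sign, horizontally adjacent tiles lie at distance at least $h_1\ge h_2/\delta$ from $R$, so that the cross-tile cancellations degrade the constant $c_{II}(0)$ by at most the multiplicative factor $e^{4\delta}$ appearing in \eqref{eq:lemmacIIb}. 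Extracting the explicit numerical constant $\tfrac{129}{9\pi}\,e^{-\alpha_s/2}$ will require keeping careful track of these bounds, in particular of the partial sums of $\sum n^{-2}$ and of the Bessel series defining $\alpha_s$ in \eqref{eq:as}.

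Finally, granting $\mathcal{H}_L(\bm{\sigma}_c(h_1,h_2)) > \mathcal{H}_L(\bm{\sigma}_{II})$, I would conclude by applying the chessboard bound \eqref{eq:RP} to $\bm{\sigma}_{II}$. Its tiles are $L^2/(h_1 h_2) - 2L/h_1$ copies of size $h_1\times h_2$, together with, in equal numbers $2L/h_1$, tiles of heights $\lceil h_2/2\rceil$ and $\lfloor h_2/2\rfloor$. Substituting $\mathcal{H}_L(\bm{\sigma}_c(h_1,h_2)) = L^2\,\mathcal{E}(h_1,h_2)$ and cancelling the common $h_1\times h_2$ contribution yields $h_2\,\mathcal{E}(h_1,h_2) > \lceil h_2/2\rceil\,\mathcal{E}(h_1,\lceil h_2/2\rceil) + \lfloor h_2/2\rfloor\,\mathcal{E}(h_1,\lfloor h_2/2\rfloor)$, which for even $h_2$ is exactly the claimed inequality, and for odd $h_2$ is its area-weighted analogue (equally sufficient, since it forces $\mathcal{E}(h_1,h_2)$ to exceed a nontrivial convex combination of the energies of finer tilings, hence to be non-minimal among thick tiles).
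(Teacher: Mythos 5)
Your proposal follows essentially the same route as the paper's proof: the same competitor (flipping a height-$h_2$ strip straddling a pre-existing horizontal wall, so that the wall survives and two new walls appear at cost $4JL$ against the dipolar gain $2W$), the same Poisson-summation asymptotics yielding the threshold $h_2\gtrsim e^{(J-\alpha_s)/2}$ with the $O(h_2/h_1)\leqslant O(\delta)$ cross-column corrections absorbed into the factor $e^{4\delta}$, and the same tile count ($L^2/(h_1h_2)-2L/h_1$ full tiles plus $2L/h_1$ each of heights $\lfloor h_2/2\rfloor$ and $\lceil h_2/2\rceil$) fed into the chessboard estimate \eqref{eq:RP}. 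Your treatment of odd $h_2$ via the area-weighted convex combination is in fact more explicit than the paper's (which assumes $h_2$ even and leaves the adjustment to the reader); the one loose heuristic — that opposite-sign columns lie at distance at least $h_1$ from the flipped strip, whereas they are actually adjacent and the paper controls them through the half-stripe $\Xi$ bound $\leqslant 2h_2/h_1$ and the positivity argument $R_{II}>0$ of the Appendix — does not alter the structure or outcome of the argument.
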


\begin{proof}[Proof of Lemma \ref{lemma:cII}]
Let $L$ be an integer divisible both by $h_1$ and $h_2$. For simplicity, assume $h_2$ to be even: minor adjustments to the following argument are required if $h_2$ is odd, and these are left to the reader. 
Consider the checkerboard state $\bm{\sigma}_{c}(h_1,h_2)$ and denote by $\bm{\sigma}_{II}$ the configuration with two additional horizontal walls, placed at distance $h_2/2$ from a fixed pre-existent domain wall. Namely, $\bm{\sigma}_{II}$ is obtained starting from $\bm{\sigma}_{c}(h_1,h_2)$ and flipping all spins in a stripe of height $h_2$ placed halfway between two rows, see Fig.\ref{fig:AP2}.
\begin{figure}[t!]
\subfloat[The grey tiles represent the set of positive spins of the configuration $\bm{\sigma}_{II}$ described in the proof of Lemma \ref{lemma:cII}.]{\includegraphics[width=.45\textwidth]{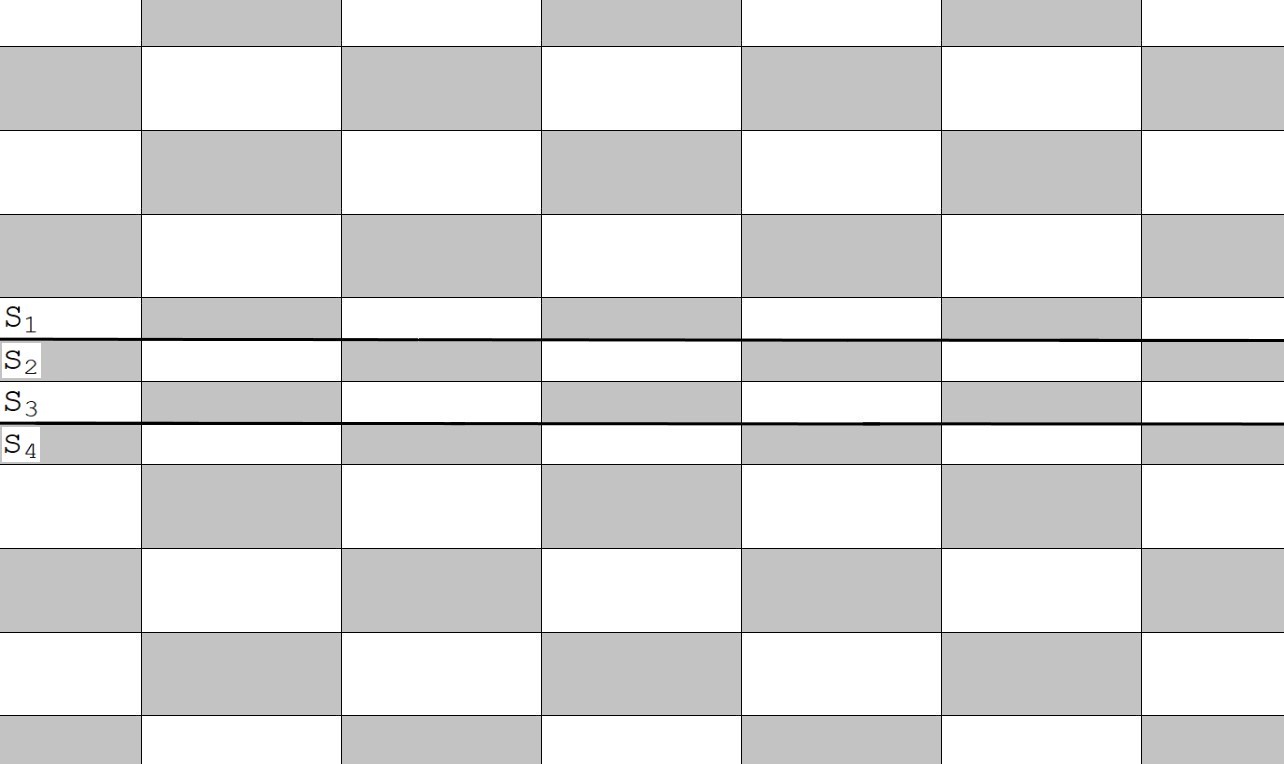}\label{fig:AP2}}
\\
\subfloat[The regions $T$ and $S_{1},S_{4}$ described in the proof of Lemma \ref{lemma:cII}.]{\includegraphics[width=.45\textwidth]{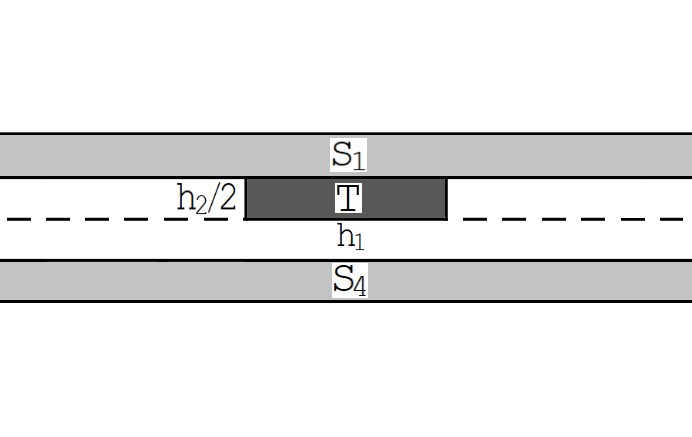}\label{fig:AP2TS1S4}}
\qquad
\subfloat[The regions $T$ and $\Xi$ described in the proof of Lemma \ref{lemma:cII}.]{\includegraphics[width=.45\textwidth]{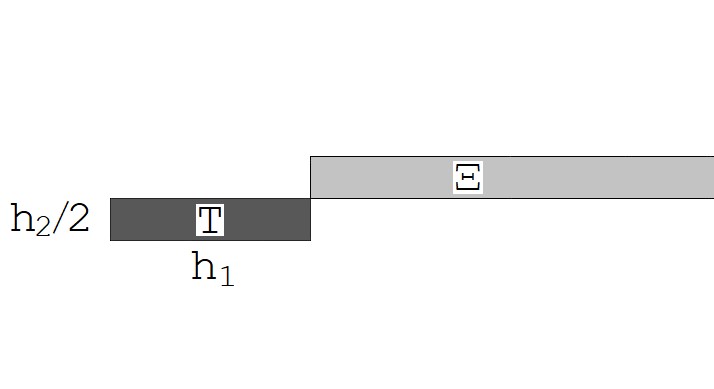}\label{fig:AP2TXi}}
\caption{}
\end{figure}
Consider the four rows of height $h_2/2$  produced by this flipping. We denote by $(S_{i}^{+})_{i = 1,2,3,4}$ (resp. $(S_{i}^{-})_{i = 1,2,3,4}$) the sets of positive (resp. negative) spin sites of $\bm{\sigma}_c(h_1,h_2)$ contained in $\Lambda_L$ and belonging to these four rows, which are numbered in increasing order from top to bottom as in Fig.\ref{fig:AP2}. Moreover, we denote by $\Delta_{e}^{+}$ (resp. $\Delta_{e}^{-}$) the union of the positive (resp. negative) spin tiles of $\bm{\sigma}_c(h_1,h_2)$ contained in $\Lambda_L$ 
which remain unaltered under the flipping. Let also $\Delta\big(\bm{\sigma}_{c}(h_1,h_2)\big) = \Delta_{e}^{+} \cup S_{1}^{+} \cup S_{2}^{+} \cup S_{3}^{+} \cup S_{4}^{+}$ (resp. $\Delta^{c}\big(\bm{\sigma}_{c}(h_1,h_2)\big) = \Delta_{e}^{-} \cup S_{1}^{-} \cup S_{2}^{-} \cup S_{3}^{-} \cup S_{4}^{-}$) be the set of positive (resp. negative) spin sites of $\bm{\sigma}_c(h_1,h_2)$ contained in $\Lambda_L$, and similarly for those of $\bm{\sigma}_{II}$: $\Delta(\bm{\sigma}_{II}) = \Delta_{e}^{+} \cup S_{1}^{+} \cup S_{2}^{-} \cup S_{3}^{-} \cup S_{4}^{+}$ and $\Delta^{c}(\bm{\sigma}_{II}) = \Delta_{e}^{-} \cup S_{1}^{-} \cup S_{2}^{+} \cup S_{3}^{+} \cup S_{4}^{-}$. By direct inspection and the spin flip symmetry of the energy, 
we infer
\begin{equation*}
\begin{split}
& \mathcal{H}_{L}\big(\bm{\sigma}_{c}(h_1,h_2)\big) - \mathcal{H}_{L}(\bm{\sigma}_{II})  \\
& \qquad 
 = - \,4 J L + 8 \sum_{\bm{x} \in S_{2}^{+}} \Big( \sum_{\bm{y} \in \Delta_{e}^{+} \cup S_{1}^{+} \cup S_{4}^{+}} - \sum_{\bm{y} \in \Delta_{e}^{-} \cup S_{1}^{-} \cup S_{4}^{-}} \Big)
 \sum_{\bm{m} \in \mathbb{Z}^2} {1 \over |\bm{x} - \bm{y} + L \bm{m}|^3}\;.\end{split}
\end{equation*}
Notice that $S_{2}^{+}$ consists of $L/(2h_1)$ positive spin half-tiles, each of size $h_1 \times (h_2/2)$. Then, recalling that $\mathcal E(h_1,h_2)=L^{-2}\mathcal H_L(\bm{\sigma}_c(h_1,h_2))$, and proceeding as discussed in the Appendix, we obtain
\begin{align}\label{weare}
\mathcal{E}(h_1,h_2) - L^{-2}\mathcal{H}_{L}(\bm{\sigma}_{II}) > \frac4L\, \bigg[ - J + {1 \over h_1} \sum_{\bm{x} \in T} \bigg( \sum_{\bm{y} \in S_1} - \;2 \sum_{\bm{y} \in S_4} - \;4 \sum_{\bm{y} \in \Xi} \bigg)\, {1 \over |\bm{x} - \bm{y}|^3} \bigg] \,,
\end{align}
where: $T$ is one of the half-tiles in $S_{2}^{+}$; $S_{1} = S_{1}^{+} \cup S_{1}^{-}$ and $S_{4} = S_{4}^{+} \cup S_{4}^{-}$ are the infinite stripes of height $h_2/2$ placed, respectively, at distances $0$ and $h_2/2$ from $T$ (see Fig. \ref{fig:AP2TS1S4}); $\Xi$ is the half-stripe of height $h_2/2$, sharing a vertex with $T$ (see Fig. \ref{fig:AP2TXi}).

We now restrict the attention to $h_1 \geqslant h_2 \geqslant c_{I}\,e^{J/2}$ (cf. Lemma \ref{lemma:cI}) and proceed to examine the case of $J$ large, entailing of course $h_1,h_2$ large.
By proceeding in a way similar to \eqref{eqqo}-\eqref{eqqo.2}, we obtain: 
\begin{equation}\begin{split}
& {1 \over h_1} \sum_{\bm{x} \in T} \sum_{\bm{y} \in S_1} {1 \over |\bm{x} - \bm{y}|^3}
	= \sum_{n_1 \in \mathbb{Z}}  \sum_{n_2\,=\,1}^{h_2}\frac{\min\{n_2, h_2-n_2\}}{(n_1^2+n_2^2)^{3/2}}  \\
& \quad =  \sum_{n_2\,=\,1}^{h_2}\min\{n_2, h_2-n_2\}\Big(\frac2{n_2^2}+\sum_{j\ge 1}8\pi \frac{j}{n_2}K_1(2\pi j n_2)\Big)\\
& \quad \geqslant 2 H_{h_2/2} + 2 \int_{h_2/2}^{h_2}\! {h_2 - x \over x^2} dx + 8 \pi \sum_{j=1}^{\infty}\sum_{n_2\,=\,1}^{+\infty} j_1\,K_{1}(2\pi\,|j_1|\,n_2) +O(h_2^{-1})\\
& \quad = 2 \log h_2 + \alpha_s - 2\log(8/\pi) \,+\, \mathcal{O}\big(h_2^{-1}\big)\,.
\end{split}
\end{equation}
Similarly, using also the fact that $0<K_1(z)\le C_1z^{-1/2}e^{-1}$ for $z\ge 1$ and a suitable $C_1>0$, 
\begin{equation}
\begin{split}
& {1 \over h_1} \sum_{\bm{x} \in T} \sum_{\bm{y} \in S_4} {1 \over |\bm{x} - \bm{y}|^3}
	= \sum_{n_1 \in \mathbb{Z}}\  \sum_{n_2\,=\,h_2/2+1}^{3h_2/2} \frac{\min\{n_2 - h_2/2, 3h_2/2 - n_2\}}{(n_1^2+n_2^2)^{3/2}} \\
& \quad =\sum_{n_2\,=\,h_2/2+1}^{3h_2/2} \min\{n_2 - h_2/2, 3h_2/2 - n_2\}\Bigg(\frac2{n_2^2}+8\pi \sum_{j\ge 1}\frac{j}{n_2}K_1(2\pi j n_2)\Bigg)\\
& \quad \leqslant 2 \int_{h_2/2}^{3h_2/2}\! {\min\{x - h_2/2,3h_2/2 - x\}  \over x^2} dx + \mathcal{O}\big(h_2^{-1}\big) \\
& \quad = 2 \log(4/3) \,+\, \mathcal{O}\big(h_2^{-1}\big)\,,
\end{split}
\end{equation}
and 
\begin{equation}\label{eqqo.3}
{1 \over h_1} \sum_{\bm{x} \in T}\sum_{\bm{y} \in \Xi} {1 \over |\bm{x} - \bm{y}|^3} = {1 \over h_1} \sum_{n_1 \,=\, 1}^{\infty}\min\{ n_1, h_1\}\sum_{n_2\,=\,1}^{h_2} \frac{
\min\{n_2,h_2 - n_2\}}{(n_1^2+n_2^2)^{3/2}}, 
\end{equation}
which, for $h_1$ large, can be thought of as a Riemann sum approximation to 
\begin{equation}\label{int.0}\int_0^\infty dx_1\min\{x_1,1\}\int_0^\zeta dx_2 \frac{\min\{x_2,\zeta-x_2\}}{(x_1^2+x_2^2)^{3/2}},\end{equation}
where $\zeta=h_2/h_1$. An evaluation of this integral and an upper bound on the remainder, i.e., on the difference between \eqref{int.0} and its Riemann sum approximation \eqref{eqqo.3}, leads to the conclusion that, for $h_2$ sufficiently large, \eqref{eqqo.3} is smaller than $2h_2/h_1$. Putting things together, we find
\begin{equation}
\begin{split}
& \mathcal{E}(h_1,h_2) - L^{-2}\mathcal{H}_{L}(\bm{\sigma}_{II})\\
&\qquad  > \frac4L\, \left[-\, J + 2 \log h_2 + \alpha_s - 2\log\left({128 \over 9\pi}\right) - {8h_2 \over h_1} \,+\, \mathcal{O}\big(h_2^{-1}\big) \right], 
\end{split}
\end{equation}
whose right side is strictly positive under the assumptions of the lemma. 

To conclude, notice that the restriction of $\sigma_{II}$ to $\Lambda_L$ consists of $4 L/h_1$ tiles of size $h_1 \times (h_2/2)$, and $L^2/(h_1 h_2) - 2 L/h_1$ tiles of size $h_1 \times h_2$. Therefore, using \eqref{eq:RP}, we infer, for any $h_1,h_2$ as in Eq. \eqref{eq:lemmacIIa},
\begin{align*}
0 & <  \mathcal{E}(h_1,h_2) - L^{-2}\mathcal{H}_{L}(\bm{\sigma}_{II})\\
& \le 
\mathcal{E}(h_1,h_2) - L^{-2}\Bigg[\frac{4 L}{h_1}\frac{h_1h_2}2\mathcal E(h_1,h_2/2)
+\Big(\frac{L^2}{h_1 h_2} - \frac{2 L}{h_1}\Big)h_1h_2\mathcal  E(h_1,h_2)\Bigg]
\end{align*}
which proves the thesis.
\end{proof}

\subsubsection{Excluding long tiles of almost-optimal width}

Finally, we expect that, if $h_2$ is (relatively) close to the optimal width $h^*$, see Eq. \eqref{eq:hst}, and $h_1$ is sufficiently large, 
then we can lower the energy of $\bm{\sigma}_{c}(h_1,h_2)$ by increasing $h_1$. This is proved in the following lemma. 

\begin{lemma}\label{lemma:cIII}
There exists a (large, compared to $1$) positive constant $J_{III}$ such that, for any $0 < \delta \leqslant 1$, $J > J_{III}$, and 
\begin{equation}\label{anot.1}c_I\,e^{J/2} \leqslant h_2 \leqslant \min\!\big\{\delta h_1,\, c_{III}(\delta)\,e^{J/2}\big\}\end{equation} 
with
\begin{equation}\label{eq:lemmacIIIb}
c_{III}(\delta) := {2 \over \pi}\, e^{2 - (\alpha_{s}/2) - \delta/4-\delta^2}= (1.507\,...\,)\,e^{- \delta/4-\delta^2}\, .
\end{equation}
then 
\begin{equation}
\mathcal{E}(h_1,h_2) >\mathcal{E}(3h_1,h_2).\end{equation}
\end{lemma}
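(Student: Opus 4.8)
The plan is to run the single-flip-plus-reflection-positivity scheme of Lemma~\ref{lemma:cI}, but applied to a \emph{vertical} pair of domain walls and with a considerably more careful treatment of the dipolar sum. Let $L$ be divisible by both $h_1$ and $h_2$, denote by $\mathcal{C}_i$ the $i$-th column of tiles of $\bm{\sigma}_c(h_1,h_2)$ (a vertical strip of width $h_1$), and let $\bm{\sigma}_{III}$ be obtained by flipping all spins in $\mathcal{C}_0$, i.e.\ by removing the two vertical domain walls bounding it; this merges $\mathcal{C}_{-1},\mathcal{C}_0,\mathcal{C}_1$ into tiles of size $3h_1\times h_2$ and leaves the remaining tiles of size $h_1\times h_2$ unaltered. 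Writing $S^\pm$ for the positive/negative tiles of $\mathcal{C}_0$ and $\Delta_e^\pm$ for the unaltered positive/negative tiles, the same direct inspection as in Lemma~\ref{lemma:cI} (using the spin-flip symmetry and the invariance of the within-column interactions under the flip) gives
\begin{equation*}
\mathcal{H}_L\big(\bm{\sigma}_c(h_1,h_2)\big) - \mathcal{H}_L(\bm{\sigma}_{III}) = 4JL - 4\sum_{\bm{x}\in S^+}\Bigg(\sum_{\bm{y}\in\Delta_e^-} - \sum_{\bm{y}\in\Delta_e^+}\Bigg)\sum_{\bm{m}\in\mathbb{Z}^2}\frac{1}{|\bm{x}-\bm{y}+L\bm{m}|^3}\,.
\end{equation*}
Since $S^+$ consists of $L/(2h_2)$ tiles, each a translate of a fixed tile $T$ of size $h_1\times h_2$, dividing by $L^2$ yields
\begin{equation*}
\mathcal{E}(h_1,h_2) - L^{-2}\mathcal{H}_L(\bm{\sigma}_{III}) = \frac{4}{L}\,\big(J - \mathcal{R}\big)\,, \qquad \mathcal{R} := \frac{1}{2h_2}\sum_{\bm{x}\in T}\Bigg(\sum_{\bm{y}\in\Delta_e^-} - \sum_{\bm{y}\in\Delta_e^+}\Bigg)\frac{1}{|\bm{x}-\bm{y}|^3}\,,
\end{equation*}
where the periodic images are harmless in the range \eqref{anot.1} and may be dropped to leading order. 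Exactly as at the end of the proof of Lemma~\ref{lemma:cI}, once we show $\mathcal{R}<J$ the bound \eqref{eq:RP} applied to the tile decomposition of $\bm{\sigma}_{III}$ (namely $L/h_2$ tiles of size $3h_1\times h_2$ and $L^2/(h_1h_2)-3L/h_2$ tiles of size $h_1\times h_2$) gives $\mathcal{E}(h_1,h_2) > \tfrac{3h_1}{L}\mathcal{E}(3h_1,h_2) + (1-\tfrac{3h_1}{L})\mathcal{E}(h_1,h_2)$, whence the desired inequality $\mathcal{E}(h_1,h_2)>\mathcal{E}(3h_1,h_2)$.

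The crux, and the step I expect to be the main obstacle, is the bound $\mathcal{R}<J$. The decisive difference with Lemma~\ref{lemma:cI} is that here one may \emph{not} discard the positive term $\sum_{\bm{y}\in\Delta_e^+}$: doing so and bounding $\sum_{\bm{y}\in\Delta_e^-}$ by a half-plane touching the short side of $T$ would reproduce the computation of Lemma~\ref{lemma:cI} with the two axes interchanged, hence a bound of order $2\log h_1$, which exceeds $J$ as soon as $h_1$ is large. The cancellation between $\Delta_e^-$ and $\Delta_e^+$ must therefore be kept in full. To exploit it I would use that $\bm{\sigma}_c(h_1,h_2)$ equals the striped state $\bm{\sigma}_s(h_2)$ modulated by the horizontal square wave $(-1)^{\lfloor x_1/h_1\rfloor}$, so that $\Delta_e^- -\Delta_e^+$ reorganizes as the alternating column sum $\sum_{i\neq 0}(-1)^i\sum_{\bm{y}\in\mathcal{C}_i}(-1)^{\lfloor y_2/h_2\rfloor}|\bm{x}-\bm{y}|^{-3}$. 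Applying Poisson summation in the vertical variable, as in the proof of Lemma~\ref{lemma:stripes}, the inner striped field is controlled by $K_1$ of argument $\pi a/h_2$, with $a$ the horizontal separation; the point is that the period-$2h_2$ vertical oscillation cuts the otherwise logarithmically divergent horizontal sum off at the scale $h_2$ rather than $h_1$. The net effect is that $\mathcal{R}$ equals $2\log h_2$ plus a constant (assembled from $\gamma$ and the Bessel series in $\alpha_s$, see \eqref{eq:as}) plus lower-order corrections, so that $\mathcal{R}<J$ reduces, to leading order, to $h_2<(\mathrm{const})\,e^{J/2}$ — precisely the upper bound $h_2\le c_{III}(\delta)\,e^{J/2}$ of \eqref{anot.1}--\eqref{eq:lemmacIIIb}.

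Two families of error terms remain, and this is where the second hypothesis $h_2\le\delta h_1$ enters. First, the non-adjacent columns $|i|\ge 2$, and more generally the horizontal separations $a\gtrsim h_2$, are exponentially small by the decay of $K_1$; against the alternating signs $(-1)^i$ they feed only the remainder. Second, since $T$ and its neighbors have finite width $h_1$, the horizontal sums are Riemann-sum approximations of integrals in the rescaled ratio $\zeta=h_2/h_1\le\delta$, just as in the passage \eqref{eqqo.3}--\eqref{int.0} of Lemma~\ref{lemma:cII}; evaluating these integrals and bounding the Riemann-sum defect is what produces the explicit correction $-\delta/4-\delta^2$ in the exponent of $c_{III}(\delta)$ in \eqref{eq:lemmacIIIb}. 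Assembling the leading logarithm, the $\alpha_s$-constant and these $O(\delta)$ corrections, and taking $J_{III}$ large enough that all $O(h_2^{-1})$ remainders (recall $h_2\ge c_I e^{J/2}$) are negligible, gives $\mathcal{R}<J$ throughout the range \eqref{anot.1}. As in \cite{GLL11}, the delicate part is to carry out the Poisson-summation bookkeeping with uniform control of the remainders, so that the resulting constant is sharp enough to match $c_{III}(\delta)$ exactly.
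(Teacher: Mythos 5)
Your setup is exactly the paper's: the paper's $\bm{\sigma}_{III}$ is precisely your column flip (removal of two consecutive vertical domain walls), the energy-difference identity $4JL-4\sum_{\bm{x}\in U^+}\big(\sum_{\bm{y}\in\Delta_e^-}-\sum_{\bm{y}\in\Delta_e^+}\big)(\cdots)$ is the same, your tile count for $\bm{\sigma}_{III}$ ($L/h_2$ tiles of size $3h_1\times h_2$, the rest $h_1\times h_2$) and the final convexity step via \eqref{eq:RP} are verbatim the paper's endgame, and your diagnosis of the crux is also correct: one may not discard $\Delta_e^+$, and the period-$2h_2$ vertical alternation is what cuts the horizontal logarithm at scale $h_2$, giving $2\log h_2$ rather than $2\log h_1$ as the leading term of $\mathcal R$.

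The genuine gap is that the decisive estimate $\mathcal{R}<J$ with the \emph{stated} constant $c_{III}(\delta)$ is not carried out, and your announced assembly of the constant is quantitatively wrong. You predict $\mathcal R = 2\log h_2$ plus ``a constant assembled from $\gamma$ and the Bessel series in $\alpha_s$'' plus $O(\delta)$ Riemann-sum corrections. That half-plane-type constant is $\alpha_s+2\log(\pi/2)$, which would yield a threshold $h_2\lesssim \frac{2}{\pi}e^{-\alpha_s/2}e^{J/2}\approx 0.20\,e^{J/2}$ --- far short of $c_{III}(\delta)e^{J/2}\approx(0.43\dots1.5)\,e^{J/2}$, so the lemma would remain unproved on most of the range \eqref{anot.1}, and the covering argument behind \eqref{consts} (where $c_{\min}=c_{III}(1)/2$ and $\delta_*$ solves $c_{II}(\delta_*)=c_{III}(\delta_*)$) would collapse. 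What your sketch is missing is the large \emph{favorable} near-field contribution of the opposite-sign neighbors of the flipped tile. The paper makes this explicit: via the rearrangement $R_{III}>0$ of the Appendix, it bounds the full alternating sum by four monotone regional sums --- $T_a$ (the rightmost $h_2\times h_2$ square of $T$) against the adjacent half-plane $\Pi$, $T_b$ against the distant half-stripe $\Xi$, \emph{minus} four times the corner tile $P$ and twice the quadrant $Q$ (Fig.~\ref{fig:AP3TaPQ}) --- and the $P$ and $Q$ terms, evaluated as Riemann sums of the explicit integrals ($\approx 0.97$ and $\approx 0.36$), contribute $4\cdot 0.97+2\cdot 0.36-\tfrac12\geqslant 4$ to the bracket in \eqref{appp}. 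This ``$+4$'' is exactly the factor $e^{2}$ in \eqref{eq:lemmacIIIb}, and the splitting $T=T_a\cup T_b$ (with the $h_1$-dependence confined to $O(h_2/h_1)$ terms from $\Xi$ and $P$, producing the $-\delta/4-\delta^2$ in the exponent) is the device that makes the bookkeeping tractable. In principle your direct Poisson-summation computation of the exact alternating sum contains these cancellations and could recover a constant at least as good, since the paper's expression is only an upper bound on $\mathcal R$; but as written your proposal neither performs that computation nor shows any mechanism producing the $e^{2}$ gain, so the step on which the lemma's constant --- and its role in the global proof --- depends is absent.
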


\begin{proof}[Proof of Lemma \ref{lemma:cIII}]
Let $L$ be an integer divisible both by $h_1$ and $h_2$.
Consider the usual checkerboard state $\bm{\sigma}_{c}(h_1,h_2)$ and let $\bm{\sigma}_{III}$ be the configuration obtained from $\bm{\sigma}_{c}(h_1,h_2)$
by removing two consecutive vertical domain walls, see Fig.\ref{fig:AP3}.
\begin{figure}[t!]
\subfloat[The grey regions represent the positive tiles of the configuration $\bm{\sigma}_{III}$ described in the proof of Lemma \ref{lemma:cIII}.]{\includegraphics[width=.45\textwidth]{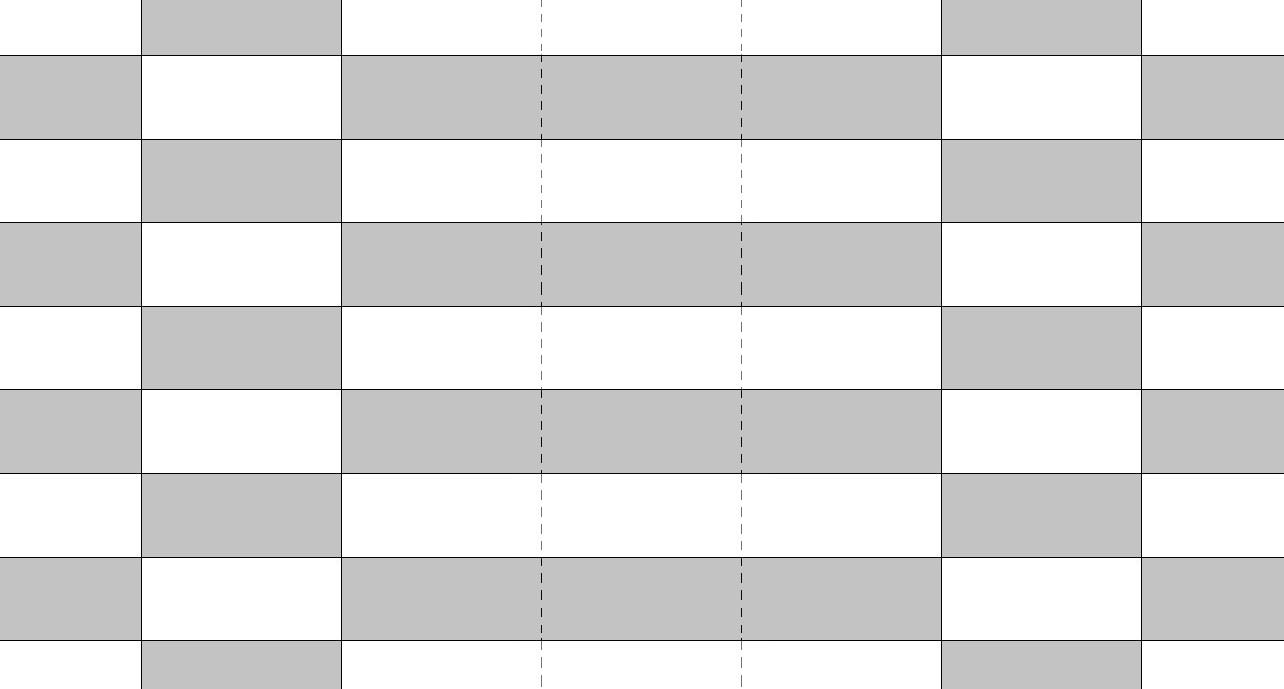}\label{fig:AP3}} 
\qquad
\subfloat[The regions $T_a$ and $\Pi$ described in the proof of Lemma \ref{lemma:cIII}.]{\includegraphics[width=.45\textwidth]{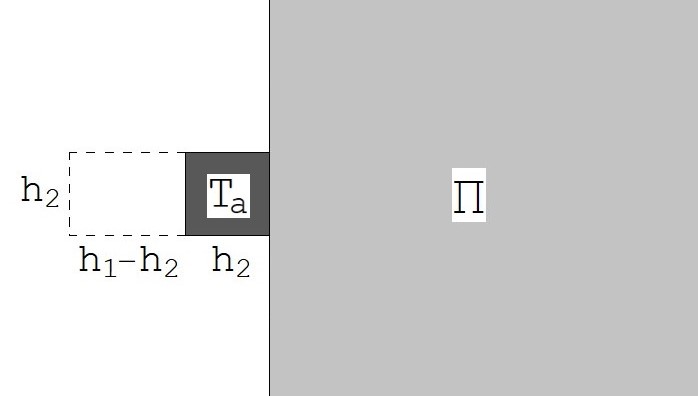}\label{fig:AP3TaPi}}
\\
\subfloat[The regions $T_b$ and $\Xi$ described in the proof of Lemma \ref{lemma:cIII}.]{\includegraphics[width=.45\textwidth]{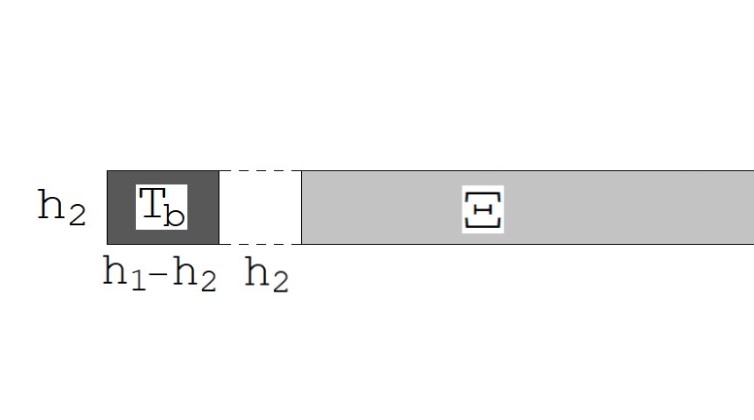}\label{fig:AP3TbXi}}
\qquad
\subfloat[The regions $T_a,P$ and $Q$ described in the proof of Lemma \ref{lemma:cIII}.]{\includegraphics[width=.45\textwidth]{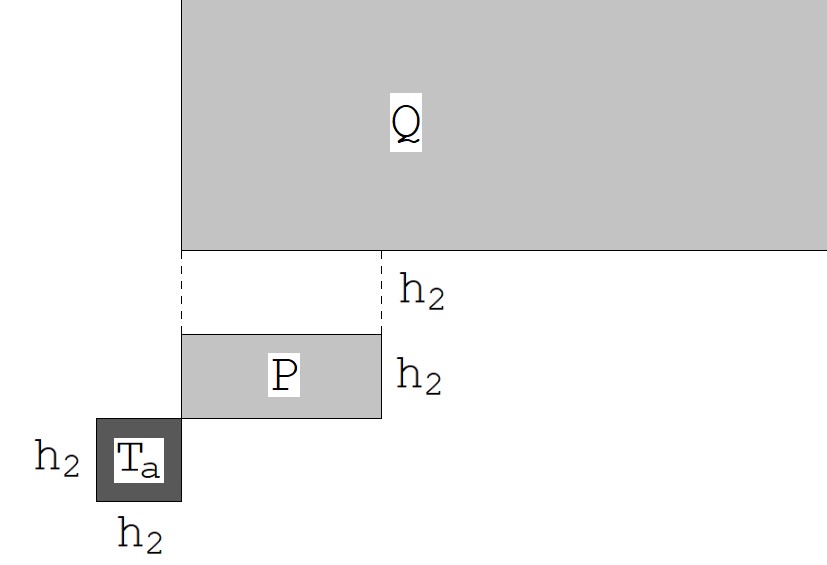}\label{fig:AP3TaPQ}}
\caption{}
\end{figure}
We denote by $U^{+}$ (resp. $U^{-}$) the union of positive (resp. negative) spin tiles of $\bm{\sigma}_{c}(h_1,h_2)$ contained in $\Lambda_L$ and belonging to the column 
subject to flipping. Moreover, let $\Delta_{e}^{+}$ (resp. $\Delta_{e}^{-}$) be the union of positive (resp. negative) spin tiles which remain unaltered under flipping. 
Let also $\Delta\big(\bm{\sigma}_{c}(h_1,h_2)\big) = \Delta_{e}^{+} \cup U^{+}$ (resp. $\Delta^{c}\big(\bm{\sigma}_{c}(h_1,h_2)\big) = \Delta_{e}^{-} \cup U^{-}$)
be the union of positive (resp. negative) spin tiles of $\bm{\sigma}_c(h_1,h_2)$ contained in $\Lambda_L$, and similarly for $\bm{\sigma}_{III}$: 
$\Delta(\bm{\sigma}_{III}) = \Delta_{e}^{+} \cup U^{-}$ and $\Delta^{c}(\bm{\sigma}_{III}) = \Delta_{e}^{-} \cup U^{+}$. In terms of these definitions, we can write
\begin{align*}
& \mathcal{H}_{L}\big(\bm{\sigma}_{c}(h_1,h_2)\big) - \mathcal{H}_{L}(\bm{\sigma}_{III})  \\
& = 4 J L - 4\, \bigg( \sum_{\bm{x} \in U^{+}} \sum_{\bm{y} \in \Delta_{e}^{-}} - \sum_{\bm{x} \in U^{+}} \sum_{\bm{y} \in \Delta_{e}^{+}} \bigg) \sum_{\bm{m} \in \mathbb{Z}^2} {1 \over |\bm{x} - \bm{y} + L \bm{m}|^3} \;.
\end{align*}
Note that $U^{+}$ consists of ${L/2h_2}$ positive spin tiles, each of size $h_1 \times h_2$. Let $T$ be any of these tiles and consider the decomposition $T \equiv T_{a} \cup T_{b}$, where $T_{a}$ is the rightmost square of side $h_2$ contained in $T$, and $T_{b}$ is the complement, i.e., the leftmost rectangle of base $h_1 - h_2$ and height $h_2$. 
Recalling that $\mathcal E(h_1,h_2)=L^{-2}\mathcal{H}_{L}\big(\bm{\sigma}_{c}(h_1,h_2)\big)$, and by proceeding 
similarly to the proof of \eqref{weare}, see Appendix, we deduce
\begin{equation}\label{appp}\begin{split}
& \mathcal{E}(h_1,h_2) - L^{-2}\mathcal{H}_{L}(\bm{\sigma}_{III}) \\
& > \frac4L \,\bigg[J - {1 \over h_2} \bigg(\sum_{\bm{x} \in T_{a}} \sum_{\bm{y} \in \Pi} + \sum_{\bm{x} \in T_{b}} \sum_{\bm{y} \in \Xi} \,-\, 4\! \sum_{\bm{x} \in T_{a}} \sum_{\bm{y} \in P} \,-\, 2\! \sum_{\bm{x} \in T_{a}} \sum_{\bm{y} \in Q}\bigg)\, {1 \over |\bm{x} - \bm{y}|^3} \bigg] \;,\end{split}
\end{equation}
where: $\Pi$ is the half-plane adjacent to $T_{a}$ (see Fig. \ref{fig:AP3TaPi}); $\Xi$ is the half-stripe aligned with $T_{b}$, placed at distance $h_2$ (see Fig. \ref{fig:AP3TbXi}); $P$ is the tile touching $T_{a}$ in one of its vertices, and $Q$ is the quadrant aligned with one of the sides of $T_{a}$ and shifted by $h_2$ in the vertical direction (see Fig. \ref{fig:AP3TaPQ}).

We now restrict the attention to $h_1 \geqslant h_2 \geqslant c_{I}\,e^{J/2}$ (cf. Lemma \ref{lemma:cI}) and proceed to examine the case of large $J$, entailing $h_1,h_2$
large. By proceeding in a way similar to the proof of the previous lemmas, we obtain: 
\begin{align*}
& {1 \over h_2} \sum_{\bm{x} \in T_{a}} \sum_{\bm{y} \in \Pi} {1 \over |\bm{x} - \bm{y}|^3}
	= \sum_{n_1=1}^\infty \min\{n_1,h_2\} \sum_{n_2 \in \mathbb{Z}} {1 \over (n_1^2+ n_2^2)^{3/2}} \\
& \quad = \sum_{n_1=1}^\infty \min\{n_1,h_2\}\Bigg(\frac2{n_1^2}+8\pi\sum_{j=1}^\infty\frac{j}{n_1}K_1(2\pi j n_1)\Bigg)\\
& \quad \leqslant 2 H_{h_2} + 2h_2 \int_{h_2}^{+\infty}\!\! {dx \over x^2} + 8\pi \sum_{n_1 \,=\, 1}^{+\infty} \sum_{j_2 \,=\, 1}^{+\infty} j_2\,K_{1}(2\pi j_2 n_1)
+ \mathcal{O}\big(h_2^{-1}\big) \\
& \quad \leqslant 2\log h_2 + \alpha_{s} + 2\log(\pi/2) + \mathcal{O}\big(h_2^{-1}\big)\,,
\end{align*}
and 
\begin{equation}\label{ert.1} {1 \over h_2} \sum_{\bm{x} \in T_{b}} \sum_{\bm{y} \in \Xi} {1 \over |\bm{x} - \bm{y}|^3}
	= {1 \over h_2} \sum_{n_1 =h_2 + 1}^{\infty}\min\{n_1-h_2, h_1 - h_2\}\sum_{|n_2| \,\leqslant\, h_2}\! {h_2 - |n_2| \over (n_1^2+ n_2^2)^{3/2}} \end{equation}
which, for $h_2$ large, can be thought of as a Riemann sum approximation to 
\begin{equation}\label{int.1}2\int_1^\infty dx_1\min\{x_1-1,Z-1\}\int_0^1 dx_2 \frac{1-x_2}{(x_1^2+x_2^2)^{3/2}},\end{equation}
where $Z=h_1/h_2$. An evaluation of this integral and an upper bound on the remainder, i.e., of the difference between \eqref{int.1} and its Riemann sum approximation \eqref{ert.1} leads to the conclusion that, for $h_2$ sufficiently large, \eqref{ert.1} is smaller than $1/2+h_2/(2h_1)$. Similarly, 
\begin{equation}\label{ert.2}
{1 \over h_2} \sum_{\bm{x} \in T_{a}} \sum_{\bm{y} \in P} {1 \over |\bm{x} - \bm{y}|^3} 
 = {1 \over h_2}\sum_{n_1 \,=\, 1}^{h_1+h_2}\min\{n_1, h_2, h_1 + h_2 - n_1\}\sum_{n_2 \,=\, 1}^{2h_2}\frac{\min\{n_2, 2h_2 - n_2\}}{(n_1^2+ n_2^2)^{3/2}}, \end{equation}
 which, for $h_2$ large, can be thought of as a Riemann sum approximation to 
\begin{equation}\label{int.2}\int_0^{Z+1} dx_1\min\{x_1,1,Z+1-x_1\}\int_0^2 dx_2 \frac{\min\{x_2,2-x_2\}}{(x_1^2+x_2^2)^{3/2}}.\end{equation}
This integral is bounded from below by $\int_0^{Z} dx_1\min\{x_1,1\}\int_0^2 dx_2 \frac{\min\{x_2,2-x_2\}}{(x_1^2+x_2^2)^{3/2}}\equiv (I)-(II)$, where
$$(I)=\int_0^{\infty} dx_1\min\{x_1,1\}\int_0^2 dx_2 \frac{\min\{x_2,2-x_2\}}{(x_1^2+x_2^2)^{3/2}}=0.97229\cdots$$ 
and 
$$(II) = \int_{Z}^\infty dx_1\int_0^2 dx_2 \frac{\min\{x_2,2-x_2\}}{(x_1^2+x_2^2)^{3/2}}\le 
 \int_{Z}^\infty \frac{dx_1}{x_1^3}\int_0^2 dx_2 \min\{x_2,2-x_2\}=\frac1{2Z^2}.$$
Therefore, an upper bound on the difference between \eqref{int.2} and its Riemann sum approximation \eqref{ert.2} leads to the conclusion that, for $h_2$ sufficiently large, \eqref{ert.2} is larger than $0.97-\frac12(h_2/h_1)^2$. Finally, 
\begin{equation}\label{ert.3}
 {1 \over h_2} \sum_{\bm{x} \in T_{a}} \sum_{\bm{y} \in Q} {1 \over |\bm{x} - \bm{y}|^3} 
= {1 \over h_2}\sum_{n_1 \,=\, 1}^{\infty}\min\{n_1,h_2\}\sum_{n_2 \,=\, 2h_2 + 1}^{\infty}\frac{\min\{n_2 - 2h_2, h_2\}}{(n_1^2+ n_2^2)^{3/2}}, \end{equation}
 which, for $h_2$ large, can be thought of as a Riemann sum approximation to 
\begin{equation}\label{int.3}\int_0^{\infty} dx_1\min\{x_1,1\}\int_2^\infty dx_2 \frac{\min\{x_2-2,1\}}{(x_1^2+x_2^2)^{3/2}}=0.36466\cdots\end{equation}
An upper bound on the difference between \eqref{int.3} and its Riemann sum approximation \eqref{ert.3} leads to the conclusion that, for $h_2$ sufficiently large, \eqref{ert.3} is larger than $0.36$. Putting things together, we find that 
\begin{align*}
& \mathcal{E}(h_1,h_2) - L^{-2}\mathcal{H}_{L}(\bm{\sigma}_{III})  \\
& > \frac4{L} \left[J - 2\log h_2 - \alpha_{s} - 2\log(\pi/2) + 4 - {h_2 \over 2h_1}-2\frac{h_2^2}{h_1^2} \right], 
\end{align*}
whose right side is strictly positive under the assumptions of the lemma. 

To conclude, notice that the restriction of $\sigma_{III}$ to $\Lambda_L$ consists of $L/h_2$ tiles of size $3h_1 \times h_2$, and $L^2/(h_1 h_2) - 3 L/h_2$ tiles of size $h_1 \times h_2$. 
Therefore, using \eqref{eq:RP}, we infer, for any $h_1,h_2$ as in \eqref{anot.1},
\begin{align*}
0 & < \mathcal{E}(h_1,h_2) - L^{-2}\mathcal{H}_{L}(\bm{\sigma}_{III}) \\
& \leqslant  \mathcal{E}(h_1,h_2) - L^{-2} \bigg[\frac{L}{h_2} 3h_1h_2\mathcal E(3h_1,h_2)+\Big( \frac{L^2}{h_1 h_2} - 3 \frac{L}{h_2}\Big)h_1h_2\mathcal E(h_1,h_2)\bigg]\,,
\end{align*}
which yields the thesis.
\end{proof}

\begin{figure}[t!]
\subfloat[]{\includegraphics[width=1.\textwidth]{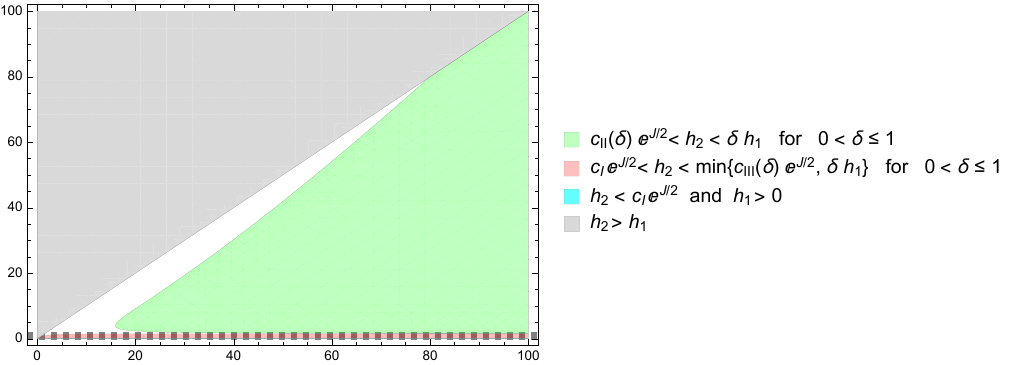}\label{fig:gb1}}
\\
\subfloat[]{\includegraphics[width=.45\textwidth]{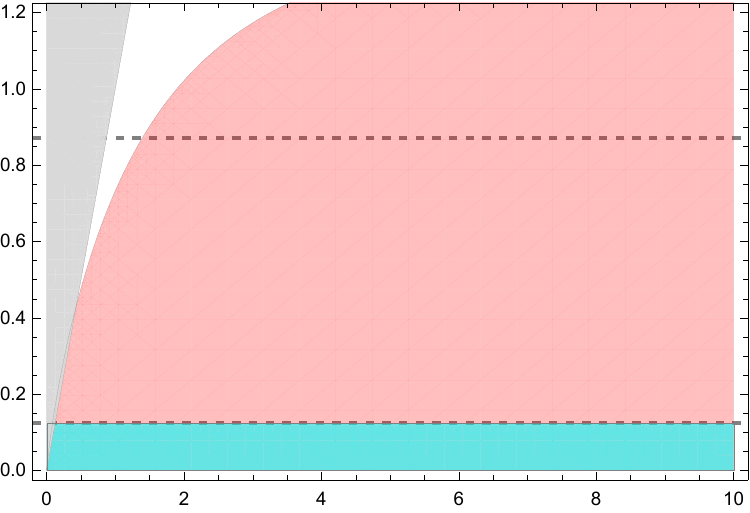}\label{fig:gb2}}
\qquad
\subfloat[]{\includegraphics[width=.45\textwidth]{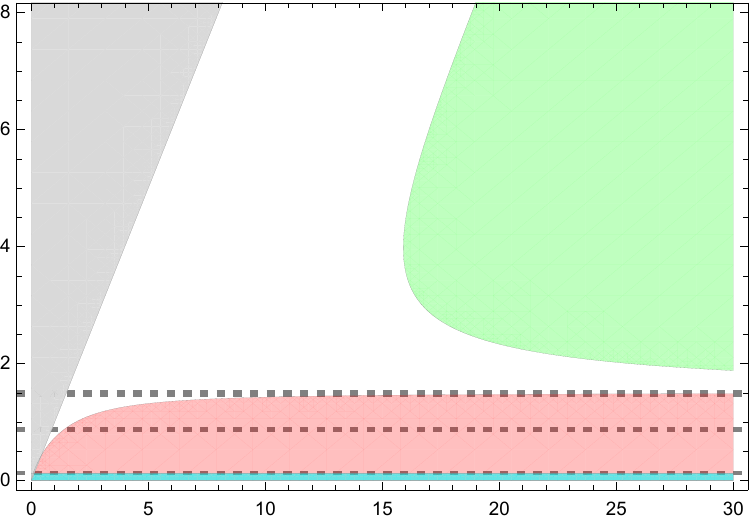}\label{fig:gb3}}
\\
\subfloat[]{\includegraphics[width=.45\textwidth]{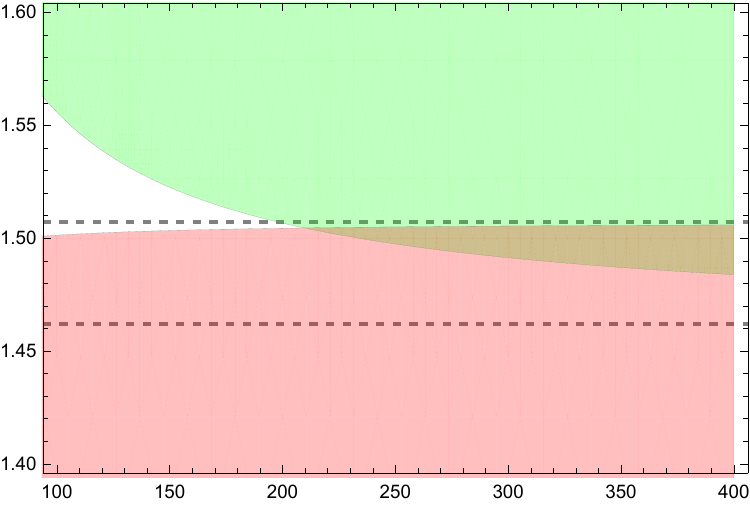}\label{fig:gb4}}
\qquad
\subfloat[]{\includegraphics[width=.45\textwidth]{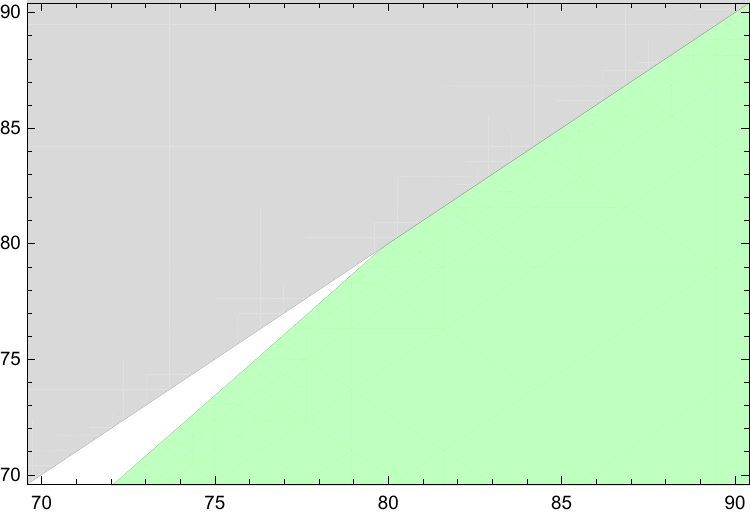}\label{fig:gb5}}
\caption{Figures \ref{fig:gb1}-\ref{fig:gb5} show different regions of the configuration space $(h_1,h_2) \in \mathbb{Z}_{+} \times \mathbb{Z}_{+}$. Units of $e^{J/2}$ are used on both axes. The colored areas respectively refer to: the condition $h_1 > h_2$ (in grey); Lemma \ref{lemma:cI} (in light blue); Lemma \ref{lemma:cII} (in green); Lemma \ref{lemma:cIII} (in red). The dashed horizontal lines correspond to: $h_2 = c_{I}\, e^{J/2}$ ($c_{I} = 0.123\,...$, see \eqref{eq:lemmacIb}), separating the blue and red regions; $h_2 = c_{II}(0)\, e^{J/2}$ ($c_{II}(0) = 1.461\,...$, see \eqref{eq:lemmacIIb}), approached asymptotically from above by the boundary of the green region for $h_1 \to +\infty$; $h_2 = c_{III}(0)\, e^{J/2}$ ($c_{III}(0) = 1.507\,...$, see \eqref{eq:lemmacIIIb}), approached asymptotically from below by the boundary of the red region for $h_1 \to +\infty$.}
\label{fig_gb}
\end{figure}

\subsubsection{Excluding tiles of finite size and bounded aspect ratio}

Lemmas \ref{lemma:cI}-\ref{lemma:cIII} imply that, for $J$ large, if $h_1<\infty$ and $(h_1,h_2)$ belongs to the union of the three regions identified by: 
\eqref{eq:lemmacIa}, the union over $\delta\in(0,1]$ of \eqref{eq:lemmacIIa}, and the union over $\delta\in(0,1]$ of \eqref{anot.1}, then $(h_1,h_2)$ is \textit{not}
a minimizer of $\mathcal E(h_1,h_2)$. In order to visualize these regions, see Fig. \ref{fig_gb}. The complement, i.e., the white region in Fig. \ref{fig_gb}, 
consists of pairs $(h_1,h_2)$ such that: $h_2/h_1$ is positive, uniformly in $J$, and smaller than $1$;  $h_2 e^{-J/2}$ is bounded from above and positive, uniformly in $J$. 
In particular, this white region is contained in 
$$\mathcal R:=\Big\{(h_1,h_2)\in \mathbb N^2 : (h_1,h_2)=\Big(\frac{h}\lambda, \frac{h}{1-\lambda}\Big),\ h\in e^{J/2}[c_{\min},c_{\max}],\ \lambda\in[\lambda_{\min},1/2]\Big\},$$
where $c_{\min},c_{\max},\lambda_{\min}$ are suitable positive constants, which can be chosen (sub-optimally) to be 
\begin{equation}\label{consts} 
c_{\min} = \frac{c_{III}(1)}2 = {0.356}\,..., \quad c_{\max} = c_{II}(1) = 79.819\,..., \quad \lambda_{\min} = \frac{\delta_{*}}{1+\delta_*} = 0.007\,...
\end{equation}
(here $\delta_{*}$ is determined by the condition $c_{II}(\delta_{*}) = c_{III}(\delta_{*})$). Therefore, in order for $(h_1,h_2)$ to be a minimizer of $\mathcal E(h_1,h_2)$ with $h_1\ge h_2$, either $(h_1,h_2)\in \mathcal R$, or $h_1=\infty$ (in which case, as discussed above, $h_2=h^*$). 
The following lemma excludes the possibility that $\mathcal R$ contains minimizers of $\mathcal E(h_1,h_2)$, 
thus concluding the proof that the only minimizers of $\mathcal E(h_1,h_2)$ with $h_1\ge h_2$ is $(\infty, h^*)$, as stated in our main theorem. 

\begin{lemma}\label{lemma:int}
There exists a (large, compared to $1$) positive constant $J_{\min}$ such that, if 
\begin{equation}\label{1.37}
J > J_{\min}\,, \quad c_{\min}\,e^{J/2} \leqslant h \leqslant c_{\max}\,e^{J/2}\,, \quad \lambda_{\min} \leqslant \lambda \leqslant 1/2\end{equation}
with $h/\lambda, h/(1-\lambda)\in\mathbb N$ and $c_{\min}, c_{\max}, \lambda_{\min}$ as in \eqref{consts}, then 
\begin{equation}
\mathcal{E}\Big({h \over \lambda}\,,\,{h \over 1 - \lambda}\Big)> \mathcal{E}_s(\lfloor h\rfloor).\end{equation}
\end{lemma}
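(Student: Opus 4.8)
The plan is to exploit the fact that the parametrization $(h_1,h_2)=\big(h/\lambda,\,h/(1-\lambda)\big)$ has been tuned so that the checkerboard $\bm\sigma_c(h_1,h_2)$ and the stripe $\bm\sigma_s(\lfloor h\rfloor)$ carry the \emph{same} density of domain walls. Indeed, the nearest-neighbor (exchange) energy per site of a rectangular checkerboard equals $2J\,(1/h_1+1/h_2)$, and with the above choice $1/h_1+1/h_2=\lambda/h+(1-\lambda)/h=1/h$, so it coincides, up to replacing $h$ by $\lfloor h\rfloor$ (an error of order $J\,e^{-J}$, hence negligible), with the exchange energy per site $2J/\lfloor h\rfloor$ of $\bm\sigma_s(\lfloor h\rfloor)$. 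Writing $\mathcal E(h_1,h_2)=2J\,(1/h_1+1/h_2)+\mathcal D(h_1,h_2)$ and $\mathcal E_s(\lfloor h\rfloor)=2J/\lfloor h\rfloor+\mathcal D_s(\lfloor h\rfloor)$, with $\mathcal D,\mathcal D_s$ the dipolar parts, the asserted inequality thus reduces to a comparison of dipolar energies at fixed wall density --- the quantitative form of the statement that, among periodic patterns with a prescribed amount of domain wall, stripes minimize the dipolar energy.

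First I would derive an asymptotic expansion of $\mathcal E(h_1,h_2)$ analogous to Lemma \ref{lemma:stripes}, valid uniformly for $h\in e^{J/2}[c_{\min},c_{\max}]$ and $\lambda\in[\lambda_{\min},1/2]$. Applying the Poisson summation formula and the Riemann-sum estimates already exploited in the proofs of Lemmas \ref{lemma:stripes}--\ref{lemma:cIII}, I expect to reach
\begin{equation*}
\mathcal E\Big(\tfrac h\lambda,\tfrac h{1-\lambda}\Big)=\frac2h\Big[\,J-2\log h-\alpha_s+\Phi(\lambda)+o(1)\,\Big],
\end{equation*}
for an explicit shape function $\Phi$, continuous on $(0,1/2]$ and built out of the convergent lattice sums and integrals produced by the procedure. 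The coefficient $-2$ of $\log h$ is the same as for the stripe precisely because the wall density is fixed; the entire dependence on the aspect ratio $h_1/h_2=(1-\lambda)/\lambda$, including the logarithms of the two distinct length scales, is bounded on $[\lambda_{\min},1/2]$ and absorbed into $\Phi$. A useful consistency check is $\Phi(\lambda)\to0$ as $\lambda\to0^+$, which must hold since in that limit $h_1\to\infty$, $h_2\to h$, and $\mathcal E(\infty,h)=\mathcal E_s(h)$.

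Granting the positivity $\Phi(\lambda)>0$ for $\lambda\in(0,1/2]$, and in particular $\Phi_{\min}:=\inf_{\lambda\in[\lambda_{\min},1/2]}\Phi(\lambda)>0$, the lemma follows: combining the expansion above with that of Lemma \ref{lemma:stripes} for $\mathcal E_s(\lfloor h\rfloor)$, and estimating the lower-order corrections (the replacement $h\mapsto\lfloor h\rfloor$, of size $O(J\,e^{-J})$, and the $o(1/h)$ remainders, all negligible compared to $1/h\sim e^{-J/2}$), one obtains
\begin{equation*}
\mathcal E\Big(\tfrac h\lambda,\tfrac h{1-\lambda}\Big)-\mathcal E_s(\lfloor h\rfloor)=\frac2h\,\Phi(\lambda)+o\!\big(1/h\big)\ \geq\ \frac{2\,\Phi_{\min}}{h}+o\!\big(1/h\big)\ >\ 0
\end{equation*}
for every $J>J_{\min}$ with $J_{\min}$ chosen large enough.

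The main obstacle is precisely the positivity of $\Phi$, uniformly down to $\lambda_{\min}$. Unlike in Lemmas \ref{lemma:cI}--\ref{lemma:cIII}, here the sign is \emph{not} read off from a single leading term: the self-energies of the two wall families partly cancel against the constant $\alpha_s$, so $\Phi$ is a genuine difference of comparable quantities. Proving $\Phi>0$ therefore demands an explicit evaluation of $\Phi$ as a one-variable integral together with a careful, sign-sensitive estimate of it --- the dipolar interaction of a fixed tile with the surrounding alternating tiles contributing with competing signs --- carried out uniformly in $\lambda$; an accompanying technical point is the uniform control of the Riemann-sum errors as the aspect ratio $(1-\lambda)/\lambda$ ranges up to $\lambda_{\min}^{-1}\approx142$. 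Once pointwise positivity is established, continuity of $\Phi$ and compactness of $[\lambda_{\min},1/2]$ upgrade it to the required uniform bound $\Phi_{\min}>0$, completing the argument.
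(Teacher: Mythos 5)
Your reduction is set up correctly: the observation that $1/h_1+1/h_2=1/h$ fixes the domain-wall density, the resulting expansion $\mathcal E(h/\lambda,h/(1-\lambda))=\frac2h\big[J-2\log h-\alpha_s+\Phi(\lambda)+o(1)\big]$, and the consistency check $\Phi(\lambda)\to0$ as $\lambda\to0^+$ all match what the paper's argument effectively produces (indeed, the paper's final bracket is nonnegative on $[0,1/2]$ and vanishes only as $\lambda\to0^+$, so you are right that uniform positivity can only hold down to $\lambda_{\min}$). But the proposal stops exactly where the lemma begins: the positivity of $\Phi$, uniformly on $[\lambda_{\min},1/2]$, is \emph{granted} rather than proved, and you yourself identify it as the main obstacle. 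Since every other ingredient (Poisson summation, Riemann-sum control, the $h\mapsto\lfloor h\rfloor$ replacement) is routine given Lemmas \ref{lemma:stripes}--\ref{lemma:cIII}, what you have is a restatement of the lemma in the variable $\lambda$, not a proof of it.

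The idea you are missing is the device the paper uses to make the sign manifest. Rather than expanding $\mathcal E(h_1,h_2)$ directly and confronting an alternating-sign dipolar sum, the paper compares the checkerboard with the \emph{sum} of two striped energies, $\mathcal E_s(h_1)+\mathcal E_s(h_2)$. Because the positive set of $\bm{\sigma}_c(h_1,h_2)$ is the overlay of a vertical-stripe and a horizontal-stripe pattern, an exact inclusion--exclusion identity (Eq.~\eqref{eq:DE1}) shows that $\mathcal E(h_1,h_2)-\mathcal E_s(h_1)-\mathcal E_s(h_2)$ equals a \emph{single-signed} lattice sum, namely the dipolar interaction between $\Delta(\bm{\sigma}_V)\cap\Delta(\bm{\sigma}_H)$ and $\Delta^c(\bm{\sigma}_V)\cap\Delta^c(\bm{\sigma}_H)$, so no sign-sensitive cancellation analysis is needed at all. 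The remaining difference $\mathcal E_s(\lfloor h\rfloor)-\mathcal E_s(h_1)-\mathcal E_s(h_2)$ is then computed in closed form from Lemma \ref{lemma:stripes}, giving the entropy-type term $-\frac4h\big(\lambda\log\lambda+(1-\lambda)\log(1-\lambda)\big)$; note that this term is \emph{positive}, i.e., the two-stripe system beats the single stripe at fixed wall density, so the qualitative positivity of the cross term does not suffice either --- one needs the quantitative lower bound the paper extracts by retaining only the corner-tile $P$ and half-stripe $\Xi$ contributions, evaluating the limiting integrals $f(\lambda)$ and $-\log\lambda+g(\lambda)$, and using concavity of $f$ and $g$ to obtain linear-in-$\lambda$ lower bounds, which beat the entropy term with margin of order $\lambda$ as $\lambda\to0^+$. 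Without some such mechanism for rendering the difference single-signed (or an equally explicit evaluation of your $\Phi$), the step you flag as ``a careful, sign-sensitive estimate'' is not a technical afterthought but the entire content of the lemma.
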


\begin{proof}[Proof of Lemma \ref{lemma:int}]
Let $L$ be an integer divisible both by $h_1=h/\lambda$ and $h_2=h/(1-\lambda)$ with $h,\lambda$ as in \eqref{1.37}. We take $J_{\min}$ sufficiently large, so that conditions \eqref{1.37} imply that $h,h_1,h_2$ are also  large. First of all, 
by means of Eq. \eqref{eq:estripes}, 
we obtain
\begin{equation}
\begin{split}
& \mathcal{E}_s(\lfloor h\rfloor) - \mathcal{E}_s(h/\lambda) - \mathcal{E}_s\big(h/(1-\lambda)\big) \\
& \quad = - \,{4 \over h} \Big(\lambda \log \lambda + (1- \lambda) \log (1 - \lambda) \Big) + \mathcal{O}\big(h^{-2}\log h\big)\,. \end{split}\label{eq:DE2}
\end{equation}
Next, we compare the energy of the checkerboard phase $\bm{\sigma}_{c} \equiv \bm{\sigma}_{c}(h_1,h_2)$ with the sum of those of the auxiliary striped configurations 
$\bm{\sigma}_{V} \equiv \bm{\sigma}_{V}(h_1)$, consisting of vertical stripes of width $h_1$ and alternating spin signs, and $\bm{\sigma}_{H} \equiv \bm{\sigma}_{H}(h_2)$, 
consisting of horizontal stripes of width $h_2$ and alternating spin signs. We let $\Delta(\bm{\sigma}_{c})$ (resp. $\Delta^c(\bm{\sigma}_{c})$) be the union of positive (resp. negative) 
spin tiles of $\bm{\sigma}_c$ contained in $\Lambda_L$, and similarly for $\bm{\sigma}_V$ and $\bm{\sigma}_H$. One has $\Delta(\bm{\sigma}_{c}) = [\Delta(\bm{\sigma}_{V}) \cap \Delta(\bm{\sigma}_{H})] \cup [\Delta^{c}(\bm{\sigma}_{V}) \cap \Delta^{c}(\bm{\sigma}_{H})]$ and $\Delta^{c}(\bm{\sigma}_{c}) = [\Delta(\bm{\sigma}_{V}) \cap \Delta^{c}(\bm{\sigma}_{H})] \cup [\Delta^{c}(\bm{\sigma}_{V}) \cap \Delta(\bm{\sigma}_{H})]$. In view of these identities and of the fact that $\mathcal E(h_1,h_2)=L^{-2}\mathcal H_L(\bm{\sigma}_c)$, 
$\mathcal E_s(h_1)=L^{-2}\mathcal H_L(\bm{\sigma}_V)$ and $\mathcal E_s(h_2)=L^{-2}\mathcal H_L(\bm{\sigma}_H)$, we get
\begin{align}
& \mathcal{E}(h_1,h_2)-\mathcal{E}_s(h_1)- \mathcal{E}_s(h_2)
	\nonumber \\
& = -\, \frac2{L^2}\Bigg(\sum_{\mbox{{\scriptsize $\begin{matrix} \bm{x} \in \Delta(\bm{\sigma}_{c}) \vspace{-0.07cm}\\  \bm{y} \in \Delta^{c}(\bm{\sigma}_{c}) \end{matrix}$}}} - \sum_{\mbox{{\scriptsize $\begin{matrix} \bm{x} \in \Delta(\bm{\sigma}_{V}) \vspace{-0.07cm}\\  \bm{y} \in \Delta^{c}(\bm{\sigma}_{V}) \end{matrix}$}}} - \sum_{\mbox{{\scriptsize $\begin{matrix} \bm{x} \in \Delta(\bm{\sigma}_{H}) \vspace{-0.07cm}\\  \bm{y} \in \Delta^{c}(\bm{\sigma}_{H}) \end{matrix}$}}} \Bigg)\sum_{\bm{m} \in \mathbb{Z}^2} {1 \over |\bm{x} - \bm{y} + L \bm{m}|^3}
	\nonumber \\
	\nonumber \\
& = \frac8{L^2}\!\! \sum_{\mbox{{\scriptsize $\begin{matrix} \bm{x} \in \Delta(\bm{\sigma}_{V}) \cap \Delta(\bm{\sigma}_{H}) \vspace{-0.07cm}\\ \bm{y} \in \Delta^{c}(\bm{\sigma}_{V}) \cap \Delta^{c}(\bm{\sigma}_{H}) \end{matrix}$}}} \sum_{\bm{m} \in \mathbb{Z}^2} {1 \over |\bm{x} - \bm{y} + L \bm{m}|^3}\,. \label{eq:DE1}
\end{align}
To proceed, notice that the set $\Delta(\bm{\sigma}_{V}) \cap \Delta(\bm{\sigma}_{H})$ consists of $L/(2h_1) \times L/(2h_2) = \lambda(1-\lambda) L^2/(4 h^2)$ tiles. Taking this into account, 
and proceeding in a way similar to, but much simpler than, the one described in the Appendix (details left to the reader), we deduce
\begin{align}\label{daje}
\mathcal{E}(h_1,h_2) - \mathcal{E}_s(h_1)-\mathcal E_s(h_2)
> {8 \lambda(1-\lambda) \over h^2} \bigg(\,\sum_{\bm{x} \in T} \sum_{\bm{y} \in P} \,+\, {1 \over 2} \sum_{\bm{x} \in T} \sum_{\bm{y} \in \Xi}\,\bigg)\, {1 \over |\bm{x} - \bm{y}|^3}\,.
\end{align}
where: $T$ is any fixed rectangular tile in $\Delta(\bm{\sigma}_{V}) \cap \Delta(\bm{\sigma}_{H})$; $P$ is a tile identical to $T$, touching the latter in one of its vertexes; $\Xi$ is the half-stripe of width $h_1$ aligned with one of the short sides of $T$ and shifted upwards by $2h_2$ (see Fig. \ref{fig:ETPXi}).
\begin{figure}[t!]
\includegraphics[width=.45\textwidth]{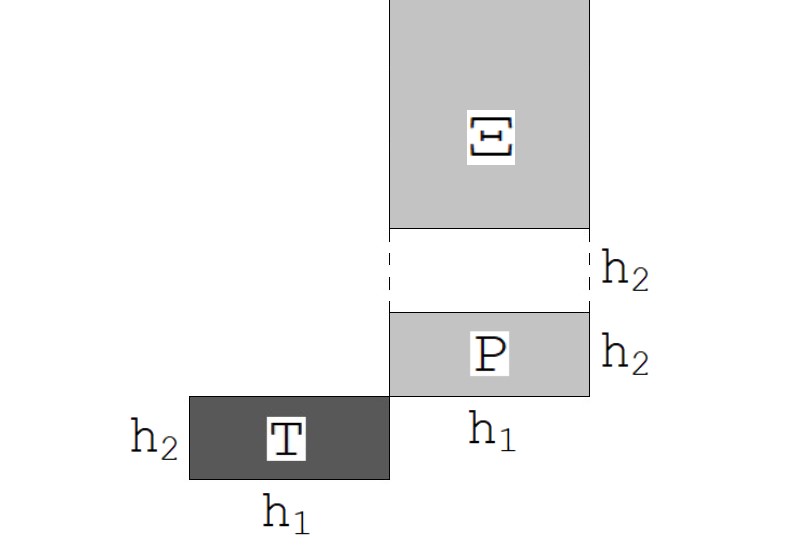}
\caption{The regions $T$, $P$ and $\Xi$ considered in the proof of Lemma \ref{lemma:int}.}\label{fig:ETPXi}
\end{figure}
Now, in order to evaluate the right side of \eqref{daje}, first of all note that 
\begin{equation}\label{pat.0}
{1 \over h}\sum_{\bm{x} \in T} \sum_{\bm{y} \in P}{1 \over |\bm{x} - \bm{y}|^3} = {1 \over h} \sum_{n_1\,=\,1}^{2h_1}\sum_{n_2\,=\,1}^{2h_2}\frac{ \min\{n_1,2h_1 - n_1\} \min\{ n_2 , 2h_2 - n_2\} }{(n_1^2 + n_2^2)^{3/2}} ,
\end{equation}
which is, for $h$ large, a Riemann sum approximation of 
\begin{equation}\label{pat.1}
\int_0^{2/\lambda} dx_1\int_0^{2/(1-\lambda)} dx_2 \frac{ \min\{x_1,2/\lambda - x_1\} \min\{ x_2 , 2/(1-\lambda) - x_2\} }{(x_1^2 + x_2^2)^{3/2}}.
\end{equation}
A patient evaluation of this integral gives
\begin{align*}
\eqref{pat.1}& =  {2 \over 1- \lambda} \Bigg[\log 2 + {2 \over \zeta}  \left( 2\, \sqrt{1 + \zeta^2/4} + \sqrt{1 + 4 \zeta^2} - 3\, \sqrt{1 + \zeta^2} \right) \\
& \qquad + {1 \over \zeta} \log\left({\big(1 + \sqrt{1 + \zeta^2}\big)^{3} \over \big(1 + \sqrt{1 + \zeta^2/4}\big)^{2} \big(1 + \sqrt{1 + 4 \zeta^2} \big)}\right) \\
& \qquad + \log\left({\big(\zeta + \sqrt{1 + \zeta^2}\big)^3 \over \big(2\zeta + \sqrt{1 + 4 \zeta^2}\big)^2 \big(\zeta/2 + \sqrt{1 + \zeta^2/4}\big)}\right)\Bigg]_{\zeta \,=\, \lambda/(1 - \lambda)}=:f(\lambda).\end{align*}
One can check that $f''(\lambda)$ is strictly negative in $[0,1/2]$ and, therefore, in this interval, $f(\lambda)$ is bounded from below by $f(0)+2\lambda(f(1/2)-f(0))=2\log 2+(0.3998\ldots)\lambda
\geqslant 2 \log 2 + \lambda/3$. Combining this with an estimate of the difference between \eqref{pat.1} and its Riemann approximation \eqref{pat.0} leads to 
\begin{equation}
{1 \over h}\sum_{\bm{x} \in T} \sum_{\bm{y} \in P}{1 \over |\bm{x} - \bm{y}|^3} \ge 2 \log 2 + \lambda/3+\mathcal O(h^{-1}\log h). 
\end{equation}
The second double sum in the right side of \eqref{daje} is bounded similarly: first, note that 
\begin{equation}\label{pat.03} {1 \over h} \sum_{\bm{x} \in T} \sum_{\bm{y} \in \Xi} {1 \over |\bm{x} - \bm{y}|^3} = {1 \over h} \sum_{n_1\,=\,1}^{2h_1}\sum_{n_2\,=\,2h_2 + 1}^{\infty}  \frac{\min\{n_1,2h_1 - n_1\}\min\{n_2 - 2h_2, h_2\}}{(n_1^2 + n_2^2)^{3/2}} \end{equation}
which is, for $h$ large, a Riemann sum approximation of 
\begin{equation}\label{pat.3}
\int_0^{2/\lambda} dx_1\int_{2/(1-\lambda)}^\infty dx_2 \frac{ \min\{x_1,2/\lambda - x_1\} \min\{ x_2 - 2/(1-\lambda),  1/(1-\lambda)\} }{(x_1^2 + x_2^2)^{3/2}}. 
\end{equation}
A patient evaluation of this integral gives
\begin{align*}
\eqref{pat.3}& = {1 \over 1 -\lambda} \Bigg[ - \log \zeta + 2 - 3 \log 3 + {4 \over \zeta} \left( \sqrt{1 + 4\zeta^2} + \sqrt{1 + 9\zeta^2/4} - \sqrt{1 + \zeta^2} - \sqrt{1 + 9\zeta^2}\right) \\
& \qquad + {2 \over \zeta} \log\left({\big(1 + \sqrt{1 + \zeta^2}\big)\big(1 + \sqrt{1 + 9\zeta^2}\big) \over \big(1 + \sqrt{1 + 4\zeta^2}\big)\big(1 + \sqrt{1 + 9\zeta^2/4}\big)}\right) \\
& \qquad + \log\left({\big(\zeta + \sqrt{1 + \zeta^2}\big)^2\big(3 \zeta + \sqrt{1 + 9\zeta^2}\big)^6 \over \big(2 \zeta + \sqrt{1 + 4\zeta^2}\big)^4 \big(3 \zeta/2 + \sqrt{1 + 9\zeta^2/4}\big)^3}\right)
\Bigg]_{\zeta\,=\,\lambda/(1-\lambda)}=: -\log\lambda+g(\lambda).
\end{align*}
One can check that the second derivative of the function $g(\lambda)$ defined here is strictly negative in $[0,1/2]$ and, therefore, in this interval, $g(\lambda)$ is bounded from below by $g(0)+2\lambda(g(1/2)-g(0))=
2 - 3 \log 3 + (1.497\ldots)\lambda
\geqslant 2 - 3 \log 3 + \lambda$. Combining this with an estimate of the difference between \eqref{pat.3} and its Riemann approximation \eqref{pat.03} leads to 
\begin{equation} {1 \over h} \sum_{\bm{x} \in T} \sum_{\bm{y} \in \Xi} {1 \over |\bm{x} - \bm{y}|^3} \ge -\log\lambda + 2 - 3 \log 3 + \lambda+\mathcal O(h^{-1}).\end{equation}
Putting things together, the previous estimates imply 
\begin{align*}
& \mathcal{E}(h_1,h_2)-\mathcal E_s(\lfloor h\rfloor) > {4 \over h} \bigg[ \left(2 - \log\Big({27 \over 16}\Big)\right) \lambda + \left(\log\Big({27 \over 16}\Big) - {1 \over 3}\right)\lambda^2 - {5 \over 3}\,\lambda^3 \\
&\hspace{4.5cm} + \lambda^2 \log \lambda + (1- \lambda) \log (1 - \lambda) \bigg] \,+\,\mathcal{O}\big(h^{-2} \log h\big)  \,,
\end{align*}
which ultimately yields the thesis, since the function between square brackets on the right-hand side is non-negative for all $\lambda\in[0,1/2]$ and only vanishes for $\lambda \to 0^{+}$.
\end{proof}

\section*{Appendix. Proofs of \eqref{weare} and \eqref{appp}.}

In order to prove \eqref{weare}, we let 
\begin{align*}
R_{II} & := \sum_{\bm{x} \in T} \Big( \sum_{\bm{y} \in \Delta_{e}^{+} \cup S_{1}^{+} \cup S_{4}^{+}} - \sum_{\bm{y} \in \Delta_{e}^{-} \cup S_{1}^{-} \cup S_{4}^{-}} \Big) \sum_{\bm{m} \in \mathbb{Z}^2} {1 \over |\bm{x} - \bm{y} + L \bm{m}|^3} \\
& \hspace{4cm} - \sum_{\bm{x} \in T} \bigg( \sum_{\bm{y} \in S_1} - \;2 \sum_{\bm{y} \in S_4} - \;4 \sum_{\bm{y} \in \Xi} \bigg)\, {1 \over |\bm{x} - \bm{y}|^3} 
\end{align*}
and note that this can be equivalently rewritten as 
\begin{equation}R_{II} = \sum_{\bm{x} \in T} \sum_{T' \in \mathcal{R}_{II}} \sum_{\bm{y} \in T'} {\sigma(T') \over |\bm{x} - \bm{y}|^3} ,\end{equation}
where the set $\mathcal{R}_{II}$ and the spins $\sigma(T') \in \{\pm 1, + 2\}$ of the tiles $T'$ forming it are described in Fig. \ref{fig:AP2RII}. 
\begin{figure}[t!]\label{fig:AP2RII}
\includegraphics[width=.6\textwidth]{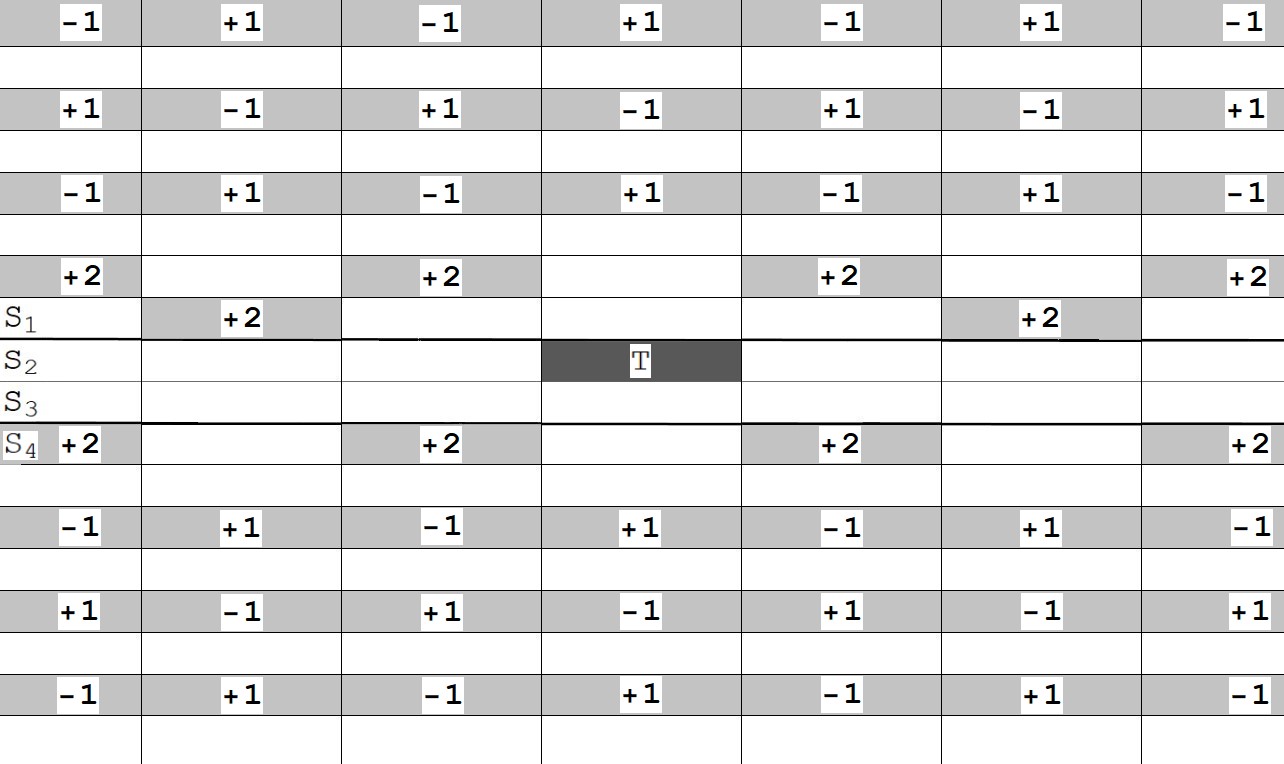}
\caption{The grey regions represent the set $\mathcal{R}_{II}$ mentioned in the proof of Lemma \ref{lemma:cII}. The signed numbers indicate the uniform spin $\sigma(T')$ of the tile $T'$ belonging to $\mathcal{R}_{II}$.}\label{fig:AP2RII}
\end{figure}
Then, by simple translation and symmetry arguments, using the monotonicity of $ {1/ |\bm{x} - \bm{y}|^3}$ we readily get
\begin{equation*}
R_{II} > 0\,,
\end{equation*}
as desired. In order to prove \eqref{appp} we proceed similarly. Let
\begin{align*}
R_{III} & := 2 \bigg(\sum_{\bm{x} \in T_{a}} \sum_{\bm{y} \in \Pi} + \sum_{\bm{x} \in T_{b}} \sum_{\bm{y} \in \Xi} \,-\, 4\! \sum_{\bm{x} \in T_{a}} \sum_{\bm{y} \in P} \,-\, 2\! \sum_{\bm{x} \in T_{a}} \sum_{\bm{y} \in Q}\bigg)\, {1 \over |\bm{x} - \bm{y}|^3} \\
& \hspace{4cm} - \sum_{\bm{x} \in T}\bigg( \sum_{\bm{y} \in \Delta_{e}^{-}} - \sum_{\bm{y} \in \Delta_{e}^{+}} \bigg) \sum_{\bm{m} \in \mathbb{Z}^2} {1 \over |\bm{x} - \bm{y} + L \bm{m}|^3} 
\end{align*}
and note that this can be equivalently rewritten as 
\begin{equation}
R_{III} = 2 \sum_{\bm{x} \in T_{a}} \sum_{T' \in \mathcal{R}_{III,a}} \sum_{\bm{y} \in T'} {\sigma(T') \over |\bm{x} - \bm{y}|^3}
	+ 2 \sum_{\bm{x} \in T_{b}} \sum_{T' \in \mathcal{R}_{III,b}} \sum_{\bm{y} \in T'} {\sigma(T') \over |\bm{x} - \bm{y}|^3} \,,\end{equation}
where the sets $\mathcal{R}_{III,a},\mathcal{R}_{III,b}$ and the spins $\sigma(T') \in \{\pm 1, + 2\}$ of the tiles $T'$ forming them are described in Figs. \ref{fig:AP3RIIIa} and \ref{fig:AP3RIIIb}. 
\begin{figure}[t!]
\subfloat[The grey regions represent the set $\mathcal{R}_{III,a}$ mentioned in the proof of Lemma \ref{lemma:cIII}. The signed numbers indicate the uniform spin $\sigma(T')$ of the tile $T'$ belonging to $\mathcal{R}_{III,a}$.]{\includegraphics[width=.45\textwidth]{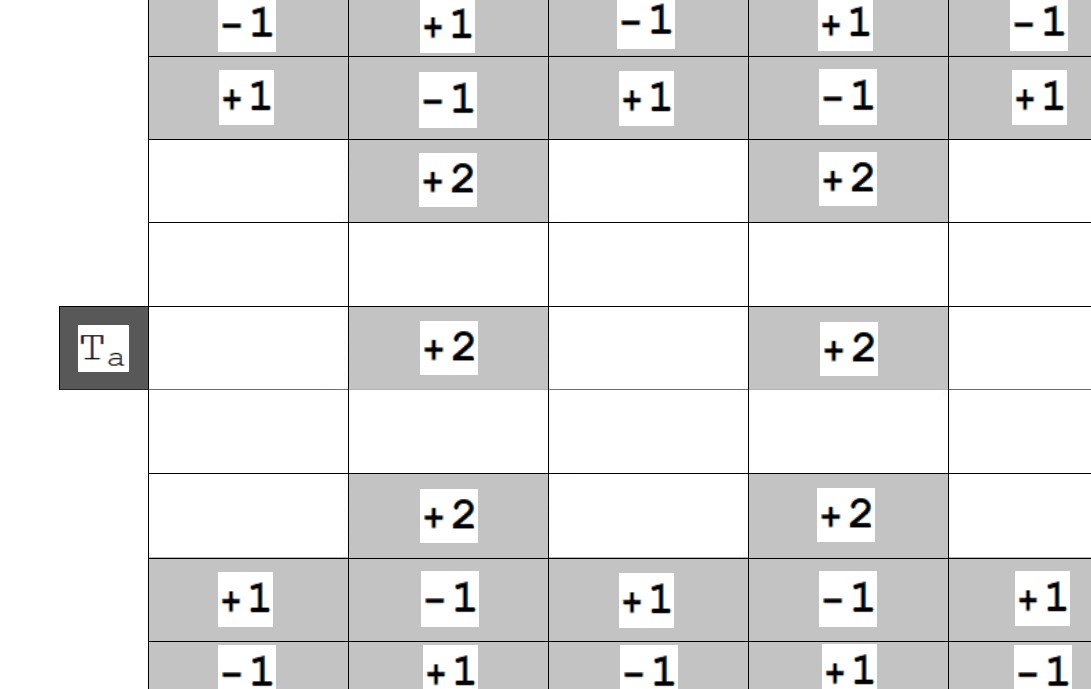}\label{fig:AP3RIIIa}}
\qquad
\subfloat[The grey regions represent the set $\mathcal{R}_{III,b}$ mentioned in the proof of Lemma \ref{lemma:cIII}. The signed numbers indicate the uniform spin $\sigma(T')$ of the tile $T'$ belonging to $\mathcal{R}_{III,b}$.]{\includegraphics[width=.45\textwidth]{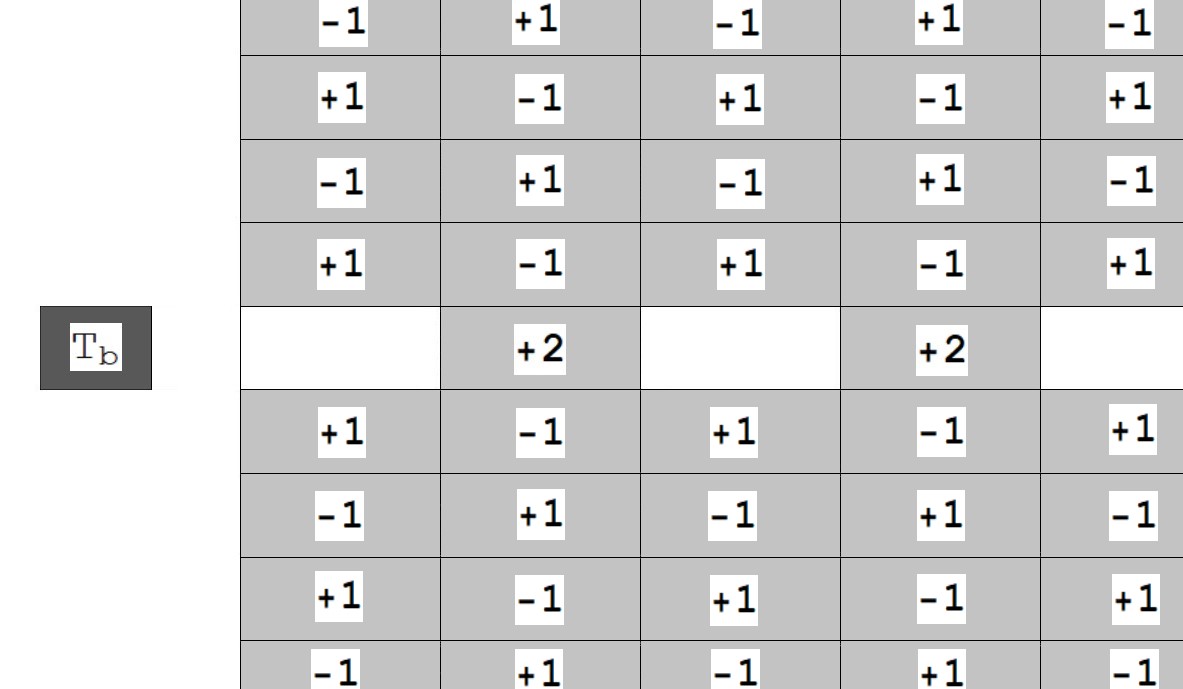}\label{fig:AP3RIIIb}}
\caption{}
\end{figure}
Then, by simple translation and symmetry arguments, using the monotonicity of $ {1/ |\bm{x} - \bm{y}|^3}$ we readily get
\begin{equation*}
R_{III} > 0\,,
\end{equation*}
as desired. 


\begin{ack}
We thank Elliott and Joel for proposing to one of us this project and for the countless exciting discussions we had on stripes and periodic states over the years. We wish Elliott many many more years of 
enthusiastic scientific activity and happy life. [Some of the results presented in this paper have been derived using the software \textsc{Mathematica} for both symbolic and numerical computations.]
\end{ack}

\begin{funding}
This work has been supported by: the European Research Council (ERC) under the European Union’s Horizon 2020 research and innovation programme (ERC CoG UniCoSM, grant agreement n. 724939);
MIUR, PRIN 2017 project MaQuMA, PRIN201719VMAST01; INdAM-GNFM Progetto Giovani 2020 ``\textsl{Emergent Features in Quantum Bosonic Theories and Semiclassical Analysis}''.
\end{funding}


\end{document}